\documentclass{SciPost}
\usepackage{hyperref}
\hypersetup{
    colorlinks=true,
    linkcolor={red!50!black},
    citecolor={blue!50!black},
    urlcolor={blue!80!black},
    pdftitle={AlgebrasTQD: Computing Algebras and Interfaces in Twisted Quantum Doubles},
    pdfauthor={Alison Warman}
}
\usepackage[bitstream-charter]{mathdesign}
\SetMathAlphabet{\mathsf}{bold}{OT1}{cmss}{bx}{n} 
\usepackage{booktabs}
\usepackage{tabularx}
\usepackage{makecell}
\usepackage{float}
\usepackage{microtype}
\usepackage{listings}
\usepackage{color}
\usepackage[dvipsnames]{xcolor}
\usepackage{bm}
\usepackage{tikz}
\usetikzlibrary{positioning}
\usepackage{amsthm}

\newtheoremstyle{plainsp}{0.75\baselineskip}{0pt}{\itshape}{}{\bfseries}{.}{ }{}
\makeatletter
\renewenvironment{proof}[1][\proofname]{\par\pushQED{\qed}\normalfont\topsep\z@\trivlist\item[\hskip\labelsep\itshape#1\@addpunct{.}]\ignorespaces}{\popQED\endtrivlist\@endpefalse}
\makeatother
\theoremstyle{plainsp}

\newtheorem{theorem}{Theorem}[section]
\newtheorem{lemma}[theorem]{Lemma}

\theoremstyle{definition}

\DeclareSymbolFont{usualmathcal}{OMS}{cmsy}{m}{n}
\DeclareSymbolFontAlphabet{\mathcal}{usualmathcal}
\newcommand{\be}{\begin{equation}}
\newcommand{\ee}{\end{equation}}
\newcommand{\ba}{\begin{aligned}}
\newcommand{\ea}{\end{aligned}}
\newcommand{\cZ}{\mathcal{Z}}
\newcommand{\cL}{\mathcal{L}}
\newcommand{\cA}{\mathcal{A}}
\newcommand{\cS}{\mathcal{S}}
\newcommand{\cI}{\mathcal{I}}
\newcommand{\cP}{\mathcal{P}}
\newcommand{\cC}{\mathcal{C}}
\renewcommand{\Vec}{\mathsf{Vec}}
\newcommand{\Rep}{\mathsf{Rep}}
\newcommand{\Z}{\mathbb{Z}}

\newcommand{\eps}{\epsilon}

\newcommand{\Tr}{\operatorname{Tr}}
\newcommand{\diag}{\operatorname{diag}}
\newcommand{\folded}{\mathrm{folded}}
\newcommand{\ol}{\overline}

\newcommand{\sym}{\mathrm{sym}}
\newcommand{\phys}{\mathrm{phys}}
\newcommand{\ket}[1]{{|#1\rangle}}
\usepackage[skip=0.75\baselineskip plus 2pt]{parskip}

\newcommand{\paragraf}[1]{\par{\bf #1}}

  \usepackage{enumitem}
  \setlist[itemize]{itemsep=0pt,parsep=0pt,labelindent=1em,leftmargin=*}
  \setlist[itemize,1]{itemsep=0.3\baselineskip}
  \setlist[enumerate]{itemsep=0pt,parsep=0pt,labelindent=1em,leftmargin=*}
  \setlist[enumerate,1]{itemsep=0.3\baselineskip}

\newcommand{\om}{\omega}
\newcommand{\III}{\text{III}}
\newcommand{\cO}{\mathcal{O}}
\newcommand{\Irr}{\text{Irr}}
\newcommand{\id}{\text{id}}

\renewcommand{\dim}{\operatorname{dim}}
\makeatletter
\providecommand\xlabel[2][]{\phantomsection\def\@currentlabelname{#1}\label{#2}}
\makeatother

\usepackage[scaled=0.82]{DejaVuSansMono}
  \makeatletter
  \begingroup\fontencoding{T1}\fontfamily{DejaVuSansMono-TLF}\selectfont\global\hyphenchar\font=\m@ne\endgroup
  \DeclareFontFamily{T1}{DejaVuSansMono-TLF}{\hyphenchar\font=\m@ne}
  \makeatother
\usepackage[skins,breakable,listings]{tcolorbox}
\definecolor{codeframe}{HTML}{D4D4D4}
\definecolor{coderow}{HTML}{F1F1F1}
\definecolor{codekw}{HTML}{1565C0}
\definecolor{codestr}{HTML}{2E7D32}
\definecolor{codecomment}{HTML}{6E7781}
\newcommand{\codezebra}[1]{\ifodd\value{lstnumber}\rlap{\hspace{-7pt}\color{coderow}\rule[-0.75ex]{\dimexpr\linewidth+14pt\relax}{\dimexpr\baselineskip-1pt\relax}}\fi}
\lstdefinestyle{gapsnippet}{
    language=GAP,
    basicstyle=\ttfamily\small,
    keywordstyle=\color{codekw},
    stringstyle=\color{codestr},
    commentstyle=\color{codecomment},
    showstringspaces=false,
    columns=fullflexible,
    keepspaces=true,
    breaklines=true,
    breakindent=1em,
    postbreak={},
    numbers=left,
    numbersep=0pt,
    numberstyle=\codezebra,
    xleftmargin=7pt,
    xrightmargin=7pt,
    lineskip=1pt,
    aboveskip=0pt,
    belowskip=0pt,
}
\tcbset{
    gapbox/.style={
        enhanced,
        breakable,
        listing only,
        listing options={style=gapsnippet},
        colback=white,
        colframe=codeframe,
        boxrule=0.6pt,
        arc=3pt,
        left=0pt,
        right=0pt,
        top=3pt,
        bottom=3pt,
        boxsep=0pt,
        before skip=0.8\baselineskip,
        after skip=0.8\baselineskip,
        finish unbroken and first={\node[anchor=north east,inner sep=0pt] at ([xshift=-5pt,yshift=-4pt]frame.north east) {\includegraphics[height=13pt]{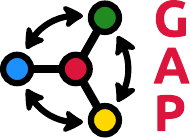}};},
    },
}
\DeclareTCBListing{gapcode}{ !O{} }{gapbox,#1}

\lstdefinestyle{gapoutput}{
    style=gapsnippet,
    language={},
    basicstyle=\ttfamily\footnotesize,
}
\tcbset{
    outbox/.style={
        enhanced,
        breakable,
        listing only,
        listing options={style=gapoutput},
        colback=gray!6,
        colframe=codeframe,
        boxrule=0.6pt,
        arc=3pt,
        left=0pt,
        right=0pt,
        top=3pt,
        bottom=3pt,
        boxsep=0pt,
        before skip=0.8\baselineskip,
        after skip=0.8\baselineskip,
    },
}
\newtcblisting{gapout}{outbox}

\tcbset{
    codebox/.style={
        enhanced,
        breakable,
        listing only,
        listing options={style=gapsnippet},
        colback=white,
        colframe=codeframe,
        boxrule=0.6pt,
        arc=3pt,
        left=0pt,
        right=0pt,
        top=3pt,
        bottom=3pt,
        boxsep=0pt,
        before skip=0.8\baselineskip,
        after skip=0.8\baselineskip,
    },
}
\newtcblisting{code}[1][]{codebox,#1}

\fancypagestyle{SPstyle}{
\fancyhf{}

\fancyfoot[C]{\textbf{\thepage}}
}

 \makeatletter  \long\def\@makefntext#1{\def\baselinestretch{1}\parindent\z@\noindent\@makefnmark\ #1\par}
  \makeatother

\begin{document}
\microtypesetup{protrusion=false,expansion=false}
\pagestyle{SPstyle}

\begin{center}{\Large\textbf{\color{scipostdeepblue}{AlgebrasTQD: Computing Algebras and Interfaces\\[1mm] in Twisted Quantum Doubles}}}\end{center}
\begin{center}\textbf{Alison Warman}\end{center}
\begin{center}{\itshape Mathematical Institute, University of Oxford,\\Andrew Wiles Building, Woodstock Road, Oxford OX2 6GG, United Kingdom}
\end{center}

\section*{\color{scipostdeepblue}{Abstract}}
\textbf{\boldmath{We present AlgebrasTQD, an open-source code written in GAP (Groups, Algorithms, and Programming), for computing condensable algebras and the corresponding topological boundaries and interfaces of twisted quantum doubles $D^\omega(G)$. These algebras can be used to classify 1+1d gapped and gapless phases whose symmetry is related by gauging to a finite group $G$ with anomaly $\omega\in H^3(G,U(1))$, to study 2+1d anyon proliferation and to construct logical gates in non-Abelian topological quantum codes. The functions in AlgebrasTQD compute all condensable algebras $A(H,F,\gamma,\epsilon)$ with their algebra structure and anyons, sort them into isomorphism classes and identify the twin algebras that are inequivalent but share the same anyon decomposition. AlgebrasTQD also determines all the subalgebra relations, the corresponding Hasse diagram and the types of 1+1d phases for any chosen, possibly non-invertible, group-theoretical symmetry. For non-maximal algebras, the code computes the reduced topological order with its $3$-cocycle and the map of anyons across the topological interface. Finally, AlgebrasTQD includes functions that construct certain non-Clifford logical gates via non-Abelian surface codes. We summarize the formulas that are implemented in the AlgebrasTQD code, and we illustrate its use with Abelian and non-Abelian groups including $\mathbb{Z}_2^3$ with $\omega_\text{III}$, $D_8$, $A_4$ with and without anomaly, twin algebras and twin phases, and the Clifford-hierarchy gates in quantum doubles of dihedral groups. All arithmetic is exact and AlgebrasTQD can be used with any finite group $G$ and 3-cocycle $\omega$, limited only by the time and size of the calculation. The code \cite{AlgebrasTQD} is available at \url{https://github.com/AlisonWarman/AlgebrasTQD}.}}

\vspace{\baselineskip}

\vspace{10pt}
\noindent\rule{\textwidth}{1pt}
\tableofcontents
\noindent\rule{\textwidth}{1pt}
\vspace{10pt}

\section{Introduction}
\label{sec:intro}

Finite group gauge theories are among the simplest and best understood topological phases of matter. In 2+1d, the gauge theory of a finite group $G$ with Dijkgraaf-Witten twist $\omega\in H^3(G,U(1))$~\cite{Dijkgraaf:1989pz,deWildPropitius:1995cf} is described by the twisted quantum double $D^\omega(G)$~\cite{Roche:1990hs}, and is realized on the lattice by the model of Kitaev~\cite{Kitaev:1997wr} and its twisted generalization~\cite{Hu:2012wx}. 

Twisted quantum doubles appear in a multitude of different settings across physics. In the context of generalized symmetries, $D^\omega(G)$ is the Symmetry Topological Field Theory (SymTFT) of 1+1d systems with symmetry $G$ and anomaly $\omega$, or any symmetry related to it by gauging, and it organizes their gapped and gapless phases and the transitions between them~\cite{Gaiotto:2020iye,Apruzzi:2021nmk,Freed:2022qnc,Kong:2020cie,Bhardwaj:2024qrf,Schafer-Nameki:2023jdn}. As a 2+1d topological order, it admits gapped boundaries and topological interfaces, and phase transitions driven by anyon proliferation~\cite{Cheng:2026qax,Cheng:2026hgu,Cheng:2026dxn,Schafer-Nameki:2026hgx,Gai:2026vgz,Eck:2026chq}. In quantum computing, the toric code $D(\Z_2)$ is the prototypical topological quantum memory, while quantum doubles of non-Abelian groups provide non-Clifford logical gates and routes to universal fault-tolerant quantum computation, e.g.~\cite{Kitaev:1997wr,Cui:2019lvb,Laubscher:2019rss,Davydova:2025ylx,Huang:2025ump,Huang:2025cvt,Kobayashi:2025cfh,Warman:2025hov,Manjunath:2026swm,Ellison:2021vth,Li:2026wly,Ding:2026bhp}. 

In all these applications, the central tools are the condensable algebras of $D^\omega(G)$, which characterize the ways that anyons can end consistently. They are classified by group-theoretical data~\cite{Davydov2009ModularIF,davydov2017lagrangian}, but finding them by hand quickly becomes impractical. For example, the quantum double of the dihedral group of order $8$ already has 38 different algebras \cite{Bhardwaj:2024qrf} while for groups of order $32$ it has hundreds. Moreover, an algebra is not determined by its anyons: twin algebras with the same anyons define different phases~\cite{Gai:2026hjk,Warman:2026gfz} and the full algebra data involves computations of projective representations and group cohomology.

AlgebrasTQD \cite{AlgebrasTQD} is an open-source GAP~\cite{GAP4} code for computations in the twisted quantum double $D^\omega(G)$ of any finite group $G$ and any $3$-cocycle $\omega$, with exact arithmetic. It provides the group cocycles that describe anomalies and SPT phases and sets up each theory with its modular data, fusion rules and readable anyon labels. Its core is the computation of all the condensable algebras: the code first computes the group-theoretical data that defines them and then decomposes the algebras into anyons. It then sorts them into isomorphism classes and computes the subalgebra relations between them. It also determines the corresponding 1+1d phases. Furthermore, every non-maximal algebra defines an interface to a reduced topological order, and the code computes the map of anyons across it. Finally, it constructs certain non-Clifford logical gates using non-Abelian surface codes. The AlgebrasTQD code therefore substantially facilitates the computation of gapped and intrinsically gapless phases~\cite{Bhardwaj:2024qrf,Warman:2024lir}, topological mixed states~\cite{Schafer-Nameki:2025fiy}, the determination of twin algebras and twin phases~\cite{Gai:2026hjk,Warman:2026gfz} and the construction of certain logical gates via non-Abelian codes~\cite{Huang:2025ump,Warman:2025hov}.

This paper is organized as follows. Section~\ref{sec:motivation} reviews the physical and mathematical settings and motivations for the code, Section~\ref{sec:theory} summarizes the theory and the formulas implemented in AlgebrasTQD, with proofs of code optimizations in Appendix~\ref{app:proofs}. Section~\ref{sec:install} describes how to run the code and its structure, while Section~\ref{sec:examples} illustrates the example scripts. Conclusions and future directions are discussed in Section~\ref{sec:conclusion}.

\section{Physical Settings and Motivations}
\label{sec:motivation}
\subsection{Phases and Transitions in 1+1d}
\label{sec:mot1d}
In this Section, we summarize the physical and mathematical settings which motivated the AlgebrasTQD code.

\paragraf{The Categorical Landau Paradigm.} In the Landau paradigm~\cite{Landau:1937obd} the gapped phases of a system with a finite symmetry group $G$ are labeled by the subgroup $H\leq G$ that is preserved, and a continuous transition between two phases requires one unbroken subgroup to be contained in the other. In 1+1d the unbroken symmetry can in addition be realized as a symmetry-protected topological (SPT) phase, so that the gapped phases are pairs $(H,\gamma)$ with $\gamma\in H^2(H,U(1))$~\cite{Chen:2010zpc,Fidkowski:2010jmn,Schuch:2011niz,Pollmann:2009ryx,Chen:2011pg}, distinguished by string order parameters~\cite{Perez-Garcia:2008hhq,Pollmann:2012,Haegeman:2012pex}. The $\Z_2\times\Z_2$ SPT (Haldane phase) is the classic example in which an SPT transition is actually a symmetry-breaking transition in disguise, revealed by a non-local transformation~\cite{Kennedy:1992ifl,kennedy1992hidden}. Generalized symmetries~\cite{Gaiotto:2014kfa,Schafer-Nameki:2023jdn,Shao:2023gho,Bhardwaj:2023kri}, which can be non-invertible (such as the $\Rep(G)$ symmetry obtained by gauging $G$), extend this framework to a categorical Landau paradigm~\cite{Bhardwaj:2023fca,Chen:2025uno,Bhardwaj:2024qrf,Moradi:2022lqp,Chatterjee:2022tyg}.

\paragraf{Symmetry Topological Field Theory.} The tool that enables systematic classifications of phases with global symmetries is the Symmetry Topological Field Theory (SymTFT)~\cite{Ji:2019jhk,Kong:2020cie,Chatterjee:2022kxb,Chatterjee:2022jll,Gaiotto:2020iye,Apruzzi:2021nmk,Freed:2022qnc,Kaidi:2022cpf,Moradi:2022lqp,Inamura:2023ldn}, Figure~\ref{fig:symtft}. The symmetry $\cS$ of a 1+1d system is a gapped ``symmetry'' boundary of a 2+1d topological bulk (whose anyons form the Drinfeld center $\cZ(\cS)$) and the dynamics occur on the other ``physical'' boundary. The system is recovered by shrinking the topological SymTFT slab. For $G$ with anomaly $\omega$, and for every symmetry related to it by gauging a subgroup with discrete torsion, the bulk is the twisted quantum double $D^\omega(G)$, whose anyons form $\cZ(\Vec_G^\omega)$. On the lattice, models with these related symmetries are mapped into each other by matrix product operator intertwiners and low-depth circuits~\cite{Lootens:2021tet,Lootens:2022avn,Lootens:2023wnl}. The generalized charges of $\cS$ are the topological defects in $\cZ(\cS)$~\cite{Bhardwaj:2023ayw}: for a 2+1d bulk, an anyon that ends on both boundaries gives rise to a local order parameter, while one that ends only on the physical boundary produces a string order parameter.

\begin{figure}
\centering
\begin{tikzpicture}[scale=1.5]
\begin{scope}[shift={(0,0)}]
    \draw[white,fill=cyan!15] (0,0) rectangle (2.4,2);
    \draw[very thick] (0,0)--(0,2) (2.4,0)--(2.4,2);
    \node at (1.2,1.6) {$D^\omega(G)$};
    \node[above] at (0,2) {$\cL_\sym$};
    \node[above] at (2.4,2) {$\cA_\phys$};
    \draw[thick,blue] (0,0.9)--(2.4,0.9);
    \fill[blue] (0,0.9) circle (1.5pt) (2.4,0.9) circle (1.5pt);
    \node[above,blue] at (1.2,0.9) {$a$};
    \node at (3.1,1) {$=$};
    \draw[very thick] (3.8,0)--(3.8,2);
    \node[above] at (3.8,2) {$\cP^{\cS}_{\cA}$};
    \fill[blue] (3.8,0.9) circle (1.5pt);
    \node[right,blue] at (3.8,0.9) {$\mathcal{O}_a$};
\end{scope}
\hfill
\begin{scope}[shift={(6.3,0)}]
    \draw[white,fill=cyan!15] (0,0) rectangle (2.2,2);
    \draw[white,fill=green!12] (2.2,0) rectangle (4.4,2);
    \draw[very thick] (0,0)--(0,2) (2.2,0)--(2.2,2) (4.4,0)--(4.4,2);
    \node at (1.1,1.6) {$D^\omega(G)$};
    \node at (3.3,1.6) {$D^\alpha(H/F)$};
    \node[above] at (2.2,2) {$\cI_{A(H,F,\gamma,\eps)}$};
    \draw[thick,blue] (0,1.15)--(2.2,1.15);
    \draw[thick,green!50!black] (0,0.75)--(4.4,0.75);
    \node[below,green!50!black] at (1.1,0.75) {$a$};
    \node[below,green!50!black] at (3.3,0.75) {$b$};
    \draw[very thick,red] (0,0.35)--(2.15,0.35)--(2.15,0);
\end{scope}
\end{tikzpicture}
\caption{Left: the SymTFT slab. The symmetry boundary $\cL_\sym$ fixes the symmetry $\cS$, the physical boundary is specified by a condensable algebra $\cA_\phys$, and anyons $a$ that end on both boundaries become the order parameters $\mathcal{O}_a$ of the phase $\cP^\cS_\cA$. Right: a non-Lagrangian algebra defines an interface to the reduced topological order. Anyons are condensed on the interface (blue), confined (red), or pass through as anyons $b$ of the reduced topological order (green): this is encoded in the map of anyons $b\mapsto a$ of \eqref{eq:map}.}
\label{fig:symtft}
\end{figure}

\paragraf{Phases and Transitions.} After fixing a symmetry, its phases are classified by condensable algebras $\cA$ on the physical boundary. When $\cA$ is Lagrangian the phase is gapped~\cite{Thorngren:2019iar,Bhardwaj:2023fca,Bhardwaj:2023idu}, and when it is not the phase is gapless, with the charges that are neither condensed nor confined carried by gapless degrees of freedom~\cite{Chatterjee:2022tyg,Wen:2023otf,Bhardwaj:2023bbf,Huang:2023pyk,Bhardwaj:2024qrf,Bottini:2025hri}. The number of gapped ground states, or of gapless universes, is the scalar product $n=\sum_an_a(\cL_\sym)\,n_a(\cA)$ of the vectors of anyon multiplicities of the symmetry and physical algebras, which distinguishes symmetry-preserving ($n=1$) from spontaneous symmetry breaking (SSB, $n>1$) phases. The inclusions between condensable algebras, encoded in a Hasse diagram~\cite{Bhardwaj:2024qrf}, determine the possible symmetry-preserving deformations. A gapless phase is intrinsically gapless when it cannot be deformed into a gapped phase with the same symmetry-breaking pattern~\cite{Wen:2023otf,Bhardwaj:2024qrf,Warman:2024lir}. The interface of a non-Lagrangian algebra to the reduced topological order, Figure~\ref{fig:symtft}, and the map of anyons across it, can be used to construct new transitions from a simpler ``input'' transition for the reduced symmetry~\cite{Chatterjee:2022tyg,Bhardwaj:2023bbf,Bhardwaj:2024qrf}. On the lattice, these phases and transitions can be realized by anyon chains or simpler spin chains~\cite{Aasen:2016dop,Aasen:2020jwb,Bhardwaj:2024kvy,Warman:2024lir,Inamura:2025cum,Lootens:2024gfp}. The computations of all condensable algebras and topological interfaces in $D^\omega(G)$ can be carried out with the AlgebrasTQD code described in this paper. 

\paragraf{Twin Phases.} It was conjectured that every 1+1d transition is related by gauging to a symmetry-breaking transition~\cite{Chen:2025uno}, as in the examples in~\cite{Kennedy:1992ifl,kennedy1992hidden,Zhang:2022wwn,Chen:2025ivo,Bhardwaj:2024qrf,Bhardwaj:2024wlr,Bhardwaj:2024kvy,Chatterjee:2024ych,Warman:2024lir,Bottini:2025hri,Chen:2026fai,Ravindran:2026jjn}. However, it was recently discovered that twin algebras~\cite{Gai:2026hjk} provide counterexamples~\cite{Warman:2026gfz}. Two Lagrangian algebras with the same anyons define two gapped phases whose order parameters lie in the same multiplet of generalized charges, and since the anyons ending on the symmetry boundary are the same for each twin, the two phases have the same number of ground states for every gauging of the symmetry: transitions between them are intrinsic deconfined quantum critical points~\cite{Senthil:2003eed,Senthil:2023vqd} in 1+1d~\cite{Warman:2026gfz}. What distinguishes twin phases is the multiplication of their order parameters~\cite{Gai:2026hjk}, or generalized string order parameters~\cite{Pollmann:2012,Haegeman:2012pex,Kobayashi:2025pxs}: both are captured by the full algebra data $A(H,F,\gamma,\eps)$ which our AlgebrasTQD code computes.

\subsection{Twisted Quantum Doubles in 2+1d}
\label{sec:mot2d}

The second physical setting of relevance for the AlgebrasTQD code is the 2+1d topological order $D^\omega(G)$ itself. It contains the simplest non-Abelian topological orders with an explicit lattice realization~\cite{Kitaev:1997wr,Hu:2012wx,Levin:2004mi,Bultinck:2015bot}, and certain examples have been realized on quantum processors~\cite{Iqbal:2023wvm}. A condensable algebra is a set of anyons that can condense consistently~\cite{Bais:2008ni,Kong:2013aya}: Lagrangian algebras describe gapped boundaries, while non-Lagrangian ones determine interfaces to smaller topological orders together with the maps of anyons across them~\cite{Beigi:2010htr,Lan:2014uaa,Kong:2017hcw,Cong:2017ffh,Lootens:2020mso,Williamson:2017uzx}. The distinction between condensation and proliferation transitions has been extensively studied recently for Abelian and non-Abelian anyons~\cite{Cheng:2026qax,Cheng:2026hgu,Cheng:2026dxn,Schafer-Nameki:2026hgx,Eck:2026chq}. The condensable algebras in $D^\om(G)$, the reduced topological orders they determine and the maps of anyons across interfaces (shown on the right of Figure~\ref{fig:symtft}), can be computed with AlgebrasTQD.

\subsection{Quantum Computing with Non-Abelian Codes}
\label{sec:motqc}

Kitaev's quantum double $D(G)$ on a patch with boundaries defines a non-Abelian surface code (with possible 3-cocycle twist $\omega$)~\cite{Kitaev:1997wr,Beigi:2010htr,Davydova:2025ylx,Huang:2025cvt,Huang:2025ump,Kobayashi:2025cfh,Warman:2025hov,Manjunath:2026swm,Ellison:2021vth,Fechisin:2026ywx,Warman:2026flz} with boundaries determined by Lagrangian algebras. Hybrid lattice surgery between Abelian and non-Abelian patches produces non-Clifford gates and magic states that can be topologically teleported into the standard surface code~\cite{Huang:2025ump,Ding:2026bhp}, while stacking a group $2$-cocycle with boundary counterterms on a triangular patch gives constant-depth logical gates at every level of the Clifford hierarchy in two dimensions~\cite{Warman:2025hov}. AlgebrasTQD computes the data for these protocols and returns the logical gate with its Clifford-hierarchy level. 

\subsection{Mathematics}
\label{sec:motmath}

Another application of the AlgebrasTQD code is to the mathematics of connected \'etale algebras~\cite{Davydov:2010kfz,davydov2017lagrangian} (called condensable algebras in the physics literature~\cite{Bais:2008ni,Kong:2013aya,Gai:2026hjk}). AlgebrasTQD can compute, for every finite group and cocycle class, the complete list of algebras up to isomorphism, their decomposition into simple objects, the braided autoequivalences of $\cZ(\Vec_G^\omega)$~\cite{Davydov2009ModularIF,Davydov:2010kfz} and the module categories that the Lagrangian algebras define~\cite{Ostrikmodule,Natale2017}.

\section{Condensable Algebras and Code Functions}
\label{sec:theory}

In this section, we list the formulas implemented by the main functions of AlgebrasTQD, based on the mathematical results of~\cite{davydov2017lagrangian,Gai:2026hjk}. The optimizations that the code adds are proved in App.~\ref{app:proofs}. This section serves as a theoretical review of the mathematical and physical background underlying the AlgebrasTQD code. The explicit code installation and examples will be described in Secs.~\ref{sec:install} and~\ref{sec:examples}.

\subsection{Twisted Quantum Doubles}
\label{sec:TQD}

The twisted quantum double $D^\omega(G)$ is the topological order of the $G$ gauge theory in 2+1d with Dijkgraaf-Witten twist $[\omega]\in H^3(G,U(1))$~\cite{Dijkgraaf:1989pz,Roche:1990hs,Hu:2012wx} and is the Symmetry Topological Field Theory (SymTFT) for every 1+1d system with symmetry related by gauging to the group $G$ with anomaly $\om$  \cite{Ji:2019jhk,Kong:2020cie,Gaiotto:2020iye,Apruzzi:2021nmk,Freed:2022qnc,Kaidi:2022cpf,Chatterjee:2022kxb,Chatterjee:2022jll,Moradi:2022lqp,Bhardwaj:2023fca,Bhardwaj:2023ayw,Bhardwaj:2023bbf,Bhardwaj:2024qrf,Bhardwaj:2024wlr,Bhardwaj:2024kvy,Chatterjee:2024ych,Warman:2024lir,Bottini:2025hri}, and its lattice realizations are the Kitaev quantum double model~\cite{Kitaev:1997wr} for trivial $\omega$ and the twisted quantum double model~\cite{Hu:2012wx,Mesaros:2012yd} in general.
Its representations are topological anyon worldlines and form the Modular Tensor Category given by the Drinfeld center $\cZ(\Vec_G^\omega)\simeq\Rep(D^\omega(G))$ of the fusion category $\Vec_G^\omega$. 

\paragraf{Group 3-cocycles.} 
Group $3$-cocycles can be computed by \texttt{omf} or, up-to-automorphism, by \texttt{omaut}, which use HAP~\cite{HAP} and some functions from \cite{mignard2017moritaequivalencepointedfusion}.
AlgebrasTQD stores each 3-cocycle in the returned list as a GAP record \texttt{rec(a,N)} in terms of an integer-valued function $a$ and a modulus $N$, with 
\be
\omega(g_1,g_2,g_3)=e^{2\pi i\,a(g_1,g_2,g_3)/N}\,.
\ee
They are normalized representatives of the classes of $H^3(G,U(1))$, and \texttt{3coctest} checks the normalization and the cocycle condition
\be
    \omega(g_2,g_3,g_4)\,\omega(g_1,g_2g_3,g_4)\,\omega(g_1,g_2,g_3)=\omega(g_1g_2,g_3,g_4)\,\omega(g_1,g_2,g_3g_4)\,.
    \label{eq:cocycle}
\ee
(using the additive function $a$ modulo $N$). The transgression and the tensor-product phase of $\omega$~\cite{Roche:1990hs,Gai:2026hjk}, computed by \texttt{3cocr}, \texttt{3cocl}, are, respectively
\be
    \omega(f,g|h)=\frac{\omega(f,{}^{g}h,g)}{\omega(f,g,h)\,\omega({}^{fg}h,f,g)}\,,\qquad \Omega_f(g,h)=\omega(f|g,h)=\frac{\omega(f,g,h)\,\omega({}^{f}g,{}^{f}h,f)}{\omega({}^{f}g,f,h)}\,.
    \label{eq:transgression}
\ee

\paragraf{Anyons.} The anyons are the topological lines in $D^\omega(G)$, and can be labeled by pairs $([g],\rho)$, in terms of a conjugacy class $[g]$, also known as the $G$ magnetic flux, and an irreducible representation $\rho$ of the centralizer $C_G(g)$ with 
\be
\rho(h_1)\rho(h_2)=\omega(h_1,h_2|g)^{-1}\rho(h_1h_2)\,,
\ee
also called the $G$ electric charge~\cite{Coste:2000tq}. The code of Gruen and Morrison~\cite{gruen2021computing} computes the projective characters $\rho$, and the function \texttt{Anyons} of AlgebrasTQD labels every anyon by its flux and by values of its character, as 
\be
([g],\chi_\rho(g)\ k{=}\chi_\rho(k)\,\dots)\,,
\ee 
for elements $k$ of $C_G(g)$ chosen so that all labels are distinct, or, for untwisted Abelian groups, by its magnetic and electric charges $m^ie^j$. \texttt{ShowAnyons} prints the labels with the quantum dimensions and topological spins,
\be
    d_{([g],\rho)}=|[g]|\,\chi_\rho(1)\,,\qquad \theta_{([g],\rho)}=\frac{\chi_\rho(g)}{\chi_\rho(1)}\,.
\ee
The pure charges $([1],R)$, with $R$ an irreducible representation of $G$, and the pure fluxes $([g],1)$ are the special cases, and the braiding of a charge around a flux is the Aharonov--Bohm effect of the gauge theory: shrinking a loop of $R$ around a line of $[g]$ produces the factor $\ol{\chi_R(g)}$, a phase when $R$ is one-dimensional, and a twist in the line of an anyon $a$ determines its spin $\theta_a$,
\be \label{eqfig1}
    \vcenter{\hbox{\begin{tikzpicture}[scale=0.8]
        \draw[thick,red] (0.8,0) arc (0:180:0.8 and 0.3);
        \draw[line width=4pt,white] (0,-1.1) -- (0,1.1);
        \draw[thick,blue,->] (0,-1.1) -- (0,1.1) node[above] {$[g]$};
        \draw[line width=4pt,white] (-0.8,0) arc (180:360:0.8 and 0.3);
        \draw[thick,red,->] (0.8,0) arc (360:240:0.8 and 0.3);
        \draw[thick,red] (-0.8,0) arc (180:360:0.8 and 0.3) node[right] {$R$};
    \end{tikzpicture}}}
    =\ol{\chi_R(g)}\;
    \vcenter{\hbox{\begin{tikzpicture}[scale=0.8]
        \draw[thick,blue,->] (0,-1.1) -- (0,1.1) node[above] {$[g]$};
    \end{tikzpicture}}}\,,\qquad\qquad
    \vcenter{\hbox{\begin{tikzpicture}[scale=0.8]
        \draw[thick,blue,->] (0.8,0) .. controls (0.8,-0.35) and (0.3,-0.3) .. (0,0) .. controls (-0.25,0.25) and (0,0.5) .. (0,1.1) node[above] {$a$};
        \draw[line width=4pt,white] (0,-1.1) .. controls (0,-0.5) and (-0.25,-0.25) .. (0,0) .. controls (0.3,0.3) and (0.8,0.35) .. (0.8,0);
        \draw[thick,blue] (0,-1.1) .. controls (0,-0.5) and (-0.25,-0.25) .. (0,0) .. controls (0.3,0.3) and (0.8,0.35) .. (0.8,0);
    \end{tikzpicture}}}
    =\theta_a\;
    \vcenter{\hbox{\begin{tikzpicture}[scale=0.8]
        \draw[thick,blue,->] (0,-1.1) -- (0,1.1) node[above] {$a$};
    \end{tikzpicture}}}\,.
\ee

\paragraf{Twisted Quantum Double.} \texttt{TQDTheory} returns the labeled anyons with the modular $S$ and $T$ matrices~\cite{Coste:2000tq}, with $S$ normalized as in Figure~\ref{fig:hopf}, computed with the code of Gruen and Morrison~\cite{gruen2021computing}, or by closed formulas for untwisted Abelian groups, and \texttt{Fusion} computes the fusion rules from $S$ by the Verlinde formula $N_{ab}^c=\sum_x S_{ax}S_{bx}\ol{S_{cx}}/S_{1x}$. The modular data does not fully determine the theory, since inequivalent modular tensor categories can have the same $S$ and $T$~\cite{mignard2021modular,Delaney:2021owx,Ng:2023wsc}, already among the twisted quantum doubles $\cZ(\Vec_G^\omega)$~\cite{mignard2021modular}, and condensable algebras with the same anyons can be inequivalent~\cite{Gai:2026hjk}. For this reason, AlgebrasTQD uses as inputs to \texttt{TQDTheory} the group and 3-cocycle data $(G,\omega)$ and does not perform computations from $S$ and $T$ alone.

\begin{figure}
\centering
\begin{tikzpicture}[baseline=(base)]
    \coordinate (base) at (0,-0.1);
    \draw[thick,blue] (0,0) ++(60:0.8) arc (60:402:0.8);
    \node[blue] at (-1.05,0) {$a$};
    \draw[thick,red] (1,0) circle (0.8);
    \node[red] at (2.05,0) {$b$};
    \draw[line width=4pt,white] (0,0) ++(-72:0.8) arc (-72:-30:0.8);
    \draw[thick,blue] (0,0) ++(-80:0.8) arc (-80:-25:0.8);
    \draw[thick,blue,->] (0,0) ++(-20:0.8) arc (-20:5:0.8);
    \draw[thick,red,->] (1,0) ++(-20:0.8) arc (-20:5:0.8);
\end{tikzpicture}
$\;=\;|G|\,S_{ab}\,,\qquad\qquad$
\begin{tikzpicture}[baseline=(base)]
    \coordinate (base) at (0,-0.1);
    \draw[thick,blue] (0,0) circle (0.8);
    \node[blue] at (-1.05,0) {$a$};
\end{tikzpicture}
$\;=\;d_a=|G|\,S_{1a}\,.$
\caption{The modular $S$ matrix of $\cZ(\Vec_G^\omega)$: the Hopf link of two anyons $a$ and $b$ is $\mathcal{D}\,S_{ab}$, with $\mathcal{D}=\bigl(\sum_ad_a^2\bigr)^{1/2}=|G|$ the total quantum dimension, and a single loop of $a$ is its quantum dimension $d_a=\mathcal{D}\,S_{1a}$. \texttt{TQDTheory} returns $S$ in this normalization, symmetric and unitary, and \texttt{Fusion} uses it in the Verlinde formula. The charge loop around a flux line in \eqref{eqfig1} is the case $a=([g],1)$, $b=([1],R)$.}
\label{fig:hopf}
\end{figure}

\subsection{Condensable Algebras}
\label{sec:algebras}

A condensable algebra is a set of anyons that can be condensed consistently, together with the multiplication that says how they fuse into each other. Mathematically, it is a connected, commutative, symmetric, normalized-special Frobenius algebra in a braided fusion category~\cite{Fuchs:2002cm,Frohlich:2003hm,Kong:2013aya}. It is called Lagrangian when it is maximal and its reduced topological order, the category of its local modules, is trivial~\cite{Davydov:2010kfz}. The condensable algebras in $\cZ(\Vec_G^\omega)$ are classified by group-theoretical data $A(H,F,\gamma,\eps)$~\cite{Davydov2009ModularIF,davydov2017lagrangian,Hannah:2023tae,Gai:2026hjk}, which we now review.

\paragraf{Algebra data.} A condensable algebra 
\be \label{eq:algebra}
A(H,F,\gamma,\eps)
\ee
is classified by a subgroup $H\leq G$, a normal subgroup $F\triangleleft H$, a normalized $\gamma:F\times F\to U(1)$ with $d\gamma=\omega|_{F\times F\times F}$, and $\eps:H\times F\to U(1)$ satisfying, for all $h,h_1,h_2\in H$ and $f,f_1,f_2\in F$, the conditions of the classification of~\cite{davydov2017lagrangian} in the form of~\cite[App.~A]{Gai:2026hjk}. In AlgebrasTQD,
\begin{enumerate}
    \item \texttt{AlgSubgroups} lists the pairs $F\triangleleft H$ up to conjugacy, 
    \item \texttt{gammaf} returns one $\gamma$ solving \eqref{eq:dgamma} per class of $H^2(F,U(1))$, and \texttt{gammatest} checks it,  
    \item \texttt{epsf} constructs all inequivalent solutions to the relations \eqref{eq:conditions},
    \item \texttt{AlgebraData} combines them.
\end{enumerate}
Physically, as a $\Vec_G^\omega$-symmetric phase $A(H,F,\gamma,\eps)$ describes a phase whose preserved symmetry is $H$, and $F$ labels the support of its twisted sectors.
$\gamma$ enters the multiplication of the twisted-sector operators $\cO_f$ for $f\in F$
\be
        \cO_{f_1}\cO_{f_2}=\gamma(f_1,f_2)\, \cO_{f_1f_2} 
\ee
while $\eps$ is a phase from the action of the symmetry on them \cite{Wen:2023otf,Gai:2026hjk}:
\be
h\,\cO_fh^{-1}=\eps(h,f)\, \cO_{hfh^{-1}}\,.
\ee
The dimension of an algebra is
\be
\dim \left( A(H,F,\gamma,\eps) \right) = |G|\,|F|/|H|\,,
\ee
and the algebra is Lagrangian if and only if $H=F$, in which case $\eps$ is fixed by $\gamma$ and we write $\cL(H,\gamma)$~\cite{davydov2017lagrangian,Gai:2026hjk}.

The algebra $A(H,F,\gamma,\eps)$ is spanned by $v_{g,f}$, $g\in G$, $f\in F$, where $f$ labels the twisted sector and the $v_{r,f}$ with $r$ in a set $R$ of representatives of the left cosets $rH$ form a basis~\cite{Davydov2009ModularIF,Hannah:2023tae,Gai:2026hjk}. \texttt{OperatorAlgebra} returns the algebra in this basis, with the grading, the multiplication and the $G$-action
\be\ba
    |v_{r,f}|&=rfr^{-1}\,, &\quad 
    v_{r,f}\,v_{r',f'}&=\delta_{r,r'}\,\frac{\gamma(f,f')}{\Omega_r(f,f')}\,v_{r,ff'}\,, &\quad
    g\cdot v_{r,f}&=\frac{\eps(h,f)\,\omega(r',h|f)}{\omega(g,r|f)}\,v_{r',{}^{h}f}\,,
    \label{eq:structure}
\ea\ee
for $gr=r'h\,,\; r'\in R\,,\; h\in H$. 

\paragraf{Anyon Decomposition.} The condensable algebra $A(H, F, \gamma, \epsilon)$ in $\cZ(\Vec_G^\omega)$ has the decomposition:
\be 
    A(H,F,\gamma,\epsilon) \cong_{\text{obj}} \bigoplus_{([x],\rho)} n_{([x],\rho)} ([x],\rho),
\ee
with the multiplicity $n_{([x],\rho)}$ of each simple object $([x], \rho)$ proven analytically in \cite{Gai:2026hjk}
\be\ba
    n_{([x],\rho)}&=\frac{1}{|G|}\sum_{\substack{f\in[x]\\ g\in C_G(f)}}\ol{\Tr\rho_f(g)}\,\sum_{\substack{y\in R\\ f^y\in F,\ g^y\in H}}\eps(g^y,f^y)\,\frac{\omega(g^y,y^{-1}|f)}{\omega(y^{-1},g|f)}
    \label{eq:anyons}
\ea\ee
with $x^y=y^{-1}xy$.
\texttt{AlgAnyonsV} and \texttt{AlgAnyonsProducts} evaluate it and check that every $n_a$ is a non-negative integer and that the quantum dimensions agree $\sum_a n_ad_a=|G|\,|F|/|H|$. 

\paragraf{Isomorphism Classes of Algebras.} \texttt{AlgebraClassesG} sorts the algebras into isomorphism classes according to the relations in \cite{Gai:2026hjk}: $A(H,F,\gamma,\eps)$ and $A(H',F',\gamma',\eps')$ are isomorphic if and only if there are $g\in G$ with $H'={}^{g}H$ and $F'={}^{g}F$, and $c:F\to U(1)$, with
\be\ba
    \frac{\gamma'({}^{g}f_1,{}^{g}f_2)\,\Omega_g(f_1,f_2)}{\gamma(f_1,f_2)}&=\frac{c(f_1f_2)}{c(f_1)\,c(f_2)}\,, &\qquad
    \frac{\eps'({}^{g}h,{}^{g}f)\,\Omega_g(h,f)}{\eps(h,f)\,\Omega_g({}^{h}f,h)}&=\frac{c({}^{h}f)}{c(f)}\,,
    \label{eq:isomorphism}
\ea\ee
for all $h\in H$ and $f,f_1,f_2\in F$. \texttt{AlgebraClasses} attaches to each class its name, dimension and anyons.

\paragraf{Twin Algebras.} Two algebras are twins if they have the same anyon decomposition but are not isomorphic~\cite{Gai:2026hjk}. Twins come in four types: with non-conjugate $H$,\footnote{For example from a Gassmann triple $(G,H_1,H_2)$~\cite{Gassmann:1926,Sunada1985RiemannianCA}.} with $F$ not conjugate under the normalizer $N_G(H)$, with inequivalent $\gamma$, for which Bogomolov multipliers give the Lagrangian examples~\cite{FABogomolov_1988,Davydov:2013xov,Pollmann:2012,Kobayashi:2025ykb,Kobayashi:2025pxs}, or with inequivalent $\eps$~\cite{Gai:2026hjk}. \texttt{AlgebraClassesG} determines them and stores the type of every twin pair it finds.

\paragraf{Subalgebras and Phases.} The condensable algebras are partially ordered by inclusion, with $A(H,F,\gamma,\eps)\preceq A(H',F',\gamma',\eps')$ generated, up to isomorphism, by the two relations~\cite{Bhardwaj:2024qrf,Gai:2026hjk}
\be\ba
    &H=H'\,,\quad F<F'\,,\quad \gamma=\gamma'|_{F\times F}\,,\quad \eps=\eps'|_{H\times F}\,, \cr
    &H'<H\,,\quad F=F'\,,\quad \gamma=\gamma'\,,\quad \eps'=\eps|_{H'\times F}\,.
    \label{eq:order}
\ea\ee
\texttt{IsSubalgebra} decides this order between two isomorphism classes of algebras and \texttt{Hasse} arranges the classes by dimension into the Hasse diagram \cite{Bhardwaj:2024qrf}, with $A(G,1,1,1)\cong_{\text{obj}} 1$ at the top and the Lagrangian algebras at the bottom.

For a symmetry boundary $\cL_\sym$, a Lagrangian algebra, the phase obtained from the SymTFT with the algebra $A$ on the physical boundary has $n=\sum_an_a(\cL_\sym)\,n_a(A)$ vacua, or gapless universes~\cite{Bhardwaj:2024qrf}. \texttt{PhaseLabels} names the phase SPT ($n=1$) or SSB ($n>1$) when $A$ is Lagrangian, and gSPT or gSSB when it is not, with the prefix i, for ``intrinsically gapless'', when every Lagrangian algebra that contains $A$ has more than $n$ vacua~\cite{Bhardwaj:2024qrf}. 

\paragraf{Reduced topological order.} An algebra $A(H,F,\gamma,\eps)$ defines a topological interface $\cI_{A(H,F,\gamma,\eps)}$ from $D^\omega(G)$ to its reduced topological order $D^\alpha(H/F)$, as shown in Fig.~\ref{fig:symtft}, since the category of local modules of $A(H,F,\gamma,\eps)$ is $\cZ(\Vec_{H/F}^\alpha)$. Some physics applications are discussed in e.g.~\cite{Beigi:2010htr,Chatterjee:2022tyg,Bhardwaj:2023bbf,Bhardwaj:2024qrf,Bhardwaj:2025jtf}.
The expression for the 3-cocycle $\alpha$ is provided in \cite{Gai:2026hjk}. In AlgebrasTQD, \texttt{alphaf} computes it, verifies with \texttt{3coctestm} that the result is a normalized $3$-cocycle, and stores its conjugate $\bar\alpha$. The pullback of $\alpha$ to $H$ is cohomologous to $\omega|_H$~\cite{Hannah:2023tae}.

Folding along the interface associates to an interface a Lagrangian algebra of the folded theory. The decomposition of the folded Lagrangian algebra into anyons gives the map of anyons from the reduced topological order to the original one,
\be\ba
    \cL^\folded(H^{\diag},\varphi)&=\bigoplus_{a\in\cZ(\Vec_G^\omega),\ b\in\cZ(\Vec_{H/F}^\alpha)}n_{ab}\,a\otimes\bar b\,, \cr
    \cZ(\Vec_{H/F}^\alpha)&\to\cZ(\Vec_G^\omega)\,,\qquad b\mapsto\bigoplus_a n_{ab}\,a\,,
    \label{eq:map}
\ea\ee
which sends the vacuum to $A(H,F,\gamma,\eps)$, confines every anyon that is not in its image, for example every $([g],\rho)$ with $[g]\cap H=\emptyset$, and is compatible with the modular data, $Sn=nS'$ and $Tn=nT'$~\cite{Chatterjee:2022tyg,Ng:2023wsc,Gai:2026hjk}. \texttt{InterfaceMap} computes one of these possible maps. Two folded Lagrangian algebras are related by a braided autoequivalence of the reduced topological order: \texttt{AllInterfaceMaps} composes the map of \texttt{InterfaceMap} with the braided autoequivalences computed by \texttt{AutomorphismsAnyons} to return all of them.

\subsection{Computing the Solutions}
\label{sec:solver}

To determine a trivialization $\gamma$ of $\omega|_F$, a function $\eps$ satisfying \eqref{eq:conditions}, or a gauge transformation $c$ used in the isomorphism relation \eqref{eq:isomorphism}, the AlgebrasTQD code uses an exact linear solver. It passes from multiplicative to additive cochains to solve linear equations, whose number it furthermore reduces by working with subgroup generators, as proved in App.~\ref{app:proofs}.

\paragraf{Additive and Multiplicative Cochains.} In AlgebrasTQD, a $3$-cocycle is stored additively, as the record \texttt{rec(a,N)} with $\omega=e^{2\pi i\,a/N}$, while the cochains $\gamma$ and $\eps$ and the phases \eqref{eq:transgression} are stored multiplicatively, as GAP functions that return cyclotomic numbers $\texttt{E(n)}^k=e^{2\pi ik/n}$, which are multiplied and compared exactly. The two forms are related by the argument of a root of unity, $\langle e^{2\pi ik/n}\rangle=k/n\in\mathbb{Q}/\Z$, which GAP returns through \texttt{DescriptionOfRootOfUnity}. An unknown cochain is written $c(f)=e^{2\pi i\,x_f}$ with $x_f\in\mathbb{Q}/\Z$, and a multiplicative equation with integer exponents $m_i$ becomes a linear congruence,
\be
    \prod_i c(f_i)^{m_i}=z\qquad\Longleftrightarrow\qquad \sum_i m_i\,x_{f_i}\equiv\langle z\rangle\bmod 1\,.
    \label{eq:additive}
\ee
A system of such equations is $Ax\equiv b\bmod 1$ with $A$ an integer matrix and $b$ a rational vector, and a solution $x_f=p/q$ is returned to multiplicative form as $c(f)=\texttt{E(q)}^p$. Conversely, \texttt{alphaf} evaluates the twist of the reduced topological order multiplicatively and then stores it as an additive cocycle \texttt{rec(a,N)}, with $N$ the least common multiple of the orders of its values, so that the reduced topological order is built by \texttt{TQDTheory}.

\paragraf{The Linear Solver.} \texttt{SolveModZ} solves $Ax\equiv b\bmod 1$, with $A\in\Z^{r\times m}$, by the Hermite normal form. Denoting by $D$ the common denominator of $b$, then, \texttt{HermiteNormalFormIntegerMat} gives
\be
    U\,[\,A\mid Db\,]=[\,H\mid b'\,]\,,
    \label{eq:hnf}
\ee
with $U$ a unimodular integer matrix and $H$ in row echelon form. Since $U$ and $U^{-1}$ have integer entries, $U$ maps $D\Z^r$ onto itself, and for every $x\in\mathbb{R}^m$
\be
    Ax-b\in\Z^r\quad\Longleftrightarrow\quad H(Dx)-b'\in D\Z^r\,.
\ee
\texttt{SolveModZ} reduces $b'$ modulo $D$ to a vector $v$ and solves $Hy=v$ over $\mathbb{Q}$ with \texttt{SolutionMat}. If $Ax\equiv b\bmod1$ has a solution, the rows $i$ in which $H$ vanishes give $b'_i\in D\Z$, that is $v_i=0$. Conversely, if $v_i=0$ for these rows, $Hy=v$ has a rational solution, since the non-zero rows of a matrix in row echelon form are linearly independent, and $x=y/D$ satisfies $H(Dx)-b'=v-b'\in D\Z^r$. Hence \texttt{SolutionMat} returns \texttt{fail} exactly when there is no solution, and otherwise \texttt{SolveModZ} returns $x=y/D$, after checking $Ax-b\in\Z^r$. Since the equivalence above holds for every real $x$, that is for every unknown phase $e^{2\pi ix}\in U(1)$, an output of \texttt{fail} proves that the multiplicative equations have no solution. Otherwise, a solution in roots of unity is returned.

\paragraf{Trivializing $\omega$ on a generating set.} \texttt{GammaTrivialization} solves $d\gamma=\omega|_{F\times F\times F}$, i.e.
\be
    \gamma(f_2,f_3)\,\gamma(f_1,f_2f_3)=\gamma(f_1f_2,f_3)\,\gamma(f_1,f_2)\,\omega(f_1,f_2,f_3)\,,
    \label{eq:dgamma}
\ee
with unknowns $\gamma(f_1,f_2)$ for $f_1,f_2\neq1$, by imposing the equation at the triples $(f_1,f_2,s)$ with $f_1,f_2\neq1$ and $s$ in a generating set $S$ of $F$, that is $(|F|-1)^2|S|$ equations. Theorem~\ref{thm:gamma} proves that these equations imply \eqref{eq:dgamma} on all of $F^3$, and that a normalized solution exists if and only if $\omega|_F$ is a coboundary. Every other trivialization is $\gamma\beta$ with $\beta$ a $2$-cocycle, and \texttt{gammaf} returns one per class of $H^2(F,U(1))$, since trivializations that differ by a coboundary give isomorphic algebras \cite{Ostrikmodule,Natale2017,Gai:2026hjk}. \texttt{GaugeableSubgroups} lists the subgroups $F$ on which $\omega$ is trivial, up to conjugacy.

\paragraf{Solving for $\eps$.} For fixed $(H,F,\gamma)$ and a generating set $S_H$ of $H$, \texttt{epsf} constructs every $\eps$ in four steps, by solving \cite{davydov2017lagrangian,Gai:2026hjk}
\be\ba
    \eps(h_1h_2,f)&=\eps(h_1,{}^{h_2}f)\,\eps(h_2,f)\,\omega(h_1,h_2|f)\,, \cr
    \gamma(f_1,f_2)\,\eps(h,f_1f_2)&=\eps(h,f_1)\,\eps(h,f_2)\,\gamma({}^{h}f_1,{}^{h}f_2)\,\Omega_h(f_1,f_2)\,, \cr
    \eps(f_1,f_2)&=\frac{\gamma(f_1,f_2)}{\gamma({}^{f_1}f_2,f_1)}\,.
    \label{eq:conditions}
\ea\ee
\begin{enumerate}
\item On $F\times F$, $\eps$ is fixed by $\gamma$ through the third line of \eqref{eq:conditions}.
\item For $s\in S_H$ not in $F$, the second line of \eqref{eq:conditions} says that $\eps(s,\cdot)$ is a projective character of $F$,
\be
    \eps(s,f_1)\,\eps(s,f_2)=\beta_s(f_1,f_2)\,\eps(s,f_1f_2)\,,\qquad \beta_s(f_1,f_2)=\frac{\gamma(f_1,f_2)}{\Omega_s(f_1,f_2)\,\gamma({}^sf_1,{}^sf_2)}\,.
    \label{eq:projective}
\ee
\texttt{SolveModZ} finds one solution $k_s$, or proves that there is none, and the candidates are $\eps(s,\cdot)=k_s\chi_s$ with $\chi_s$ a linear character of $F$.
\item Each candidate is extended to all of $H$ by the first line of \eqref{eq:conditions},
\be
    \eps(hs,f)=\eps(h,{}^{s}f)\,\eps(s,f)\,\omega(h,s|f)\,,\qquad s\in S_H\,.
    \label{eq:propagate}
\ee
Whenever this assigns a second value to an element whose value is already known, the code checks that the two values agree, (otherwise, the candidate $\eps$ is discarded since it is inconsistent).
\item The linear characters $\chi$ of $F$ act on the solutions by $\eps(h,f)\mapsto\eps(h,f)\,\chi({}^hf)/\chi(f)$, which is the isomorphism relation \eqref{eq:isomorphism} of~\cite{Davydov2009ModularIF,davydov2017lagrangian,Gai:2026hjk} with $g=1$ and $dc=1$. \texttt{epsf} keeps one candidate per class and tests it against the first two lines of \eqref{eq:conditions} on all of $H\times H\times F$ and $H\times F\times F$.
\end{enumerate}
Theorem~\ref{thm:eps} proves that the output is exactly one $\eps$ per algebra class of the solutions of \eqref{eq:conditions}. For Abelian $H$, \texttt{epsfA} follows the same steps with independent generators and trivial conjugation action.

\paragraf{Algebra Isomorphisms and Inclusions.}  These equations and the partial order relations \eqref{eq:order} use similar algorithms in AlgebrasTQD. For every $g$ that conjugates $(H,F)$ to $(H',F')$, \texttt{AlgebraClassesG} solves the first equation of \eqref{eq:isomorphism} for $c$ with \texttt{IsTrivial2coc}, through \texttt{SolveModZ}, and then tests the second for $c$ multiplied by every linear character of $F$, which is complete by Lemma~\ref{lem:isotest}. Similarly, for every group element that conjugates the subgroups into each other, \texttt{IsSubalgebra} solves both conditions \eqref{eq:order} on the restricted $\gamma$ and $\eps$ data.

\paragraf{Correctness Tests.} Every quantity that AlgebrasTQD computes is checked when it is produced. For example, \texttt{3coctest} and \texttt{gammatest} verify the cocycle and trivialization conditions, \texttt{epsf} returns only solutions of \eqref{eq:conditions}, \texttt{AlgAnyonsV} checks that the multiplicities \eqref{eq:anyons} are non-negative integers with $\sum_an_ad_a=\dim A$, and \texttt{alphaf} checks that the twist of the reduced topological order is a normalized $3$-cocycle.

\section{Installation and Source Files}
\label{sec:install}
In this section, we first explain how to install GAP and the AlgebrasTQD code. We then summarize its source code.
 
\subsection{Installing GAP and AlgebrasTQD} 
To install GAP, the user can open the official installation GAP page \url{https://www.gap-system.org/install}, click the link for their operating system and follow the installation instructions. 
Then, the user can download the zip archive of AlgebrasTQD from \url{https://github.com/AlisonWarman/AlgebrasTQD} and unzip it
\begin{code}
unzip AlgebrasTQD-main.zip    
\end{code}
then change directory into it
\begin{code}
cd AlgebrasTQD-main
\end{code}
GAP is started from this directory by running \texttt{gap}. If GAP is installed but this command does not start it, users can create a symbolic link in this directory. For example, for gap-4.16.1.tar.gz unpacked under /home/user/.gap in Linux, the user can run:
\begin{code}
ln -s /home/user/.gap/gap-4.16.1/gap gap
\end{code}
GAP can then be started from the terminal in the AlgebrasTQD-main directory by running:
\begin{code}
./gap
\end{code}
Then, the user should install the required GAP packages with:\footnote{We show the GAP logo on the top right for code to be executed in a GAP interactive session.}
\begin{gapcode}
LoadPackage("PackageManager");
InstallPackage("HAP");
InstallPackage("IO");
InstallPackage("CTblLib");
\end{gapcode}
and load the package by running
\begin{gapcode}
Read("AlgebrasTQD.g");
\end{gapcode}
We will illustrate the theory and examples in detail in the following sections.\\
In GAP, every number is expressed exactly, using the roots of unity \texttt{E(n)}$=e^{2\pi i/n}$. \\
At the end of the session, to exit from GAP, the user can run
\begin{gapcode}
quit;
\end{gapcode}

\subsection{Source Files}
\label{sec:source}
We explain how to install GAP, its required packages and load AlgebrasTQD \cite{AlgebrasTQD} in Sec.~\ref{sec:install}. The main file \texttt{AlgebrasTQD.g} checks that the required packages (HAP, CTblLib and IO) are installed, declares the names used across files, and reads the eight source files of Table~\ref{tab:files} in order. The header of each source file lists its contents and above each function there is a  comment with its role, input and output.

\newpage
\begin{table}
\centering
\small
\renewcommand{\arraystretch}{1.3}
\begin{tabular}{|l|m{0.58\textwidth}|}
\hline
File in \texttt{source/} & Contents and main functions \\
\hline\hline
\makecell[l]{\texttt{0\_Modular\_Data\_Mignard-}\\\texttt{Schauenburg-Gruen-Morrison.g}} & 3-cocycles up to automorphism, projective characters of centralizers and the modular data of twisted quantum doubles~\cite{mignard2017moritaequivalencepointedfusion,gruen2021computing}. \\
\hline
\texttt{1\_preliminaries.g} & Groups with labeled generators from a name or an identifier (\texttt{MakeGroup}, \texttt{RelabelGroup}, \texttt{GroupElement}, \texttt{ShowGroup}), and the exact solver of linear systems modulo $1$ using the Hermite normal form (\texttt{SolveModZ}). \\
\hline
\texttt{2\_anyons.g} & Labels for all the anyons, their characters, quantum dimensions and topological spins, and the fusion rules (\texttt{Anyons}, \texttt{ShowAnyons}, \texttt{AbelianLabels}, \texttt{CustomAnyonLabels}, \texttt{Fusion}, \texttt{ShowFusion}). \\
\hline
\texttt{3\_cocycles.g} & $3$-cocycles using HAP \cite{HAP} (\texttt{omf}, \texttt{omaut}, \texttt{H3Invariants}, \texttt{H3Cocycle}, \texttt{3cocr}, \texttt{3cocl}), their tests and normalization (\texttt{3coctest}, \texttt{NormalizeCocycle}), the trivializations $\gamma$ and the subgroups admitting them (\texttt{gammaf}, \texttt{GaugeableSubgroups}), and the TQD topological theory as a GAP record (\texttt{TQDTheory}, \texttt{ShowTQDTheory}). \\
\hline
\texttt{4\_algebra\_data.g} & Subgroup pairs up to conjugacy (\texttt{AlgSubgroups}), the solutions $\eps$ of \eqref{eq:conditions} (\texttt{epsf}, \texttt{epsfA}), every condensable algebra of a TQD theory (\texttt{AlgebraData}, \texttt{TQDAlgebras}), names and display (\texttt{AlgebraName}, \texttt{ShowAlgData}, \texttt{AlgebraDisplayData}). \\
\hline
\texttt{5\_interface\_map.g} & The anyon decomposition of algebras \eqref{eq:anyons} (\texttt{AlgAnyonsV}, \texttt{AlgAnyonsProducts}, \texttt{AlgebrasWithAnyons}, \texttt{FindAlgebras}), the twist of the reduced topological order (\texttt{alphaf}), the interface maps \eqref{eq:map} (\texttt{InterfaceMap}, \texttt{AllInterfaceMaps}) and the braided autoequivalences (\texttt{AutomorphismsAnyons}, \texttt{AutomorphismsDG}). \\
\hline
\makecell[l]{\texttt{6\_algebra\_classes\_and\_}\\\texttt{inclusions.g}} & Isomorphism classes and twins by \eqref{eq:isomorphism} (\texttt{AlgebraClasses}), subalgebras and the Hasse diagram (\texttt{IsSubalgebra}, \texttt{Hasse}), phase names (\texttt{PhaseLabels}), the algebra of order parameters \eqref{eq:structure} (\texttt{OperatorAlgebra}), the topological sectors of a gapped boundary~\cite{Ostrikmodule} (\texttt{TopologicalSectors}), and the \LaTeX{} output functions (\texttt{AlgebraTableTeX}, \texttt{TwinTableTeX}, \texttt{HasseTikz}, \texttt{HasseTableTeX}). \\
\hline
\texttt{7\_quantum\_computing.g} & Qudit Pauli and Clifford groups and the Clifford hierarchy (\texttt{Dtilde}, \texttt{CliffordLevelOf}), vertex and gauging projectors (\texttt{VertexProjectorL}, \texttt{DiagProjector}), lattice surgery (\texttt{RoughMergeSplit}, \texttt{D4NSurgeryGate}), and SPT-stacking gates (\texttt{SPTStackingGate}, \texttt{D4NCocycle}, \texttt{D4NPhaseGate}). \\
\hline
\end{tabular}
\caption{The source files, read in this order by \texttt{AlgebrasTQD.g}.}
\label{tab:files}
\end{table}

\newpage
The basic data objects used by AlgebrasTQD are GAP lists and records. We summarize the main types here:
\begin{itemize}
\item An algebra $A(H,F,\gamma,\eps)$ is stored as the list \texttt{[H,F,gamma,eps,G]}, with \texttt{F} a normal subgroup of \texttt{H} and \texttt{gamma}, \texttt{eps} GAP functions of two group elements that return roots of unity.
\item A $3$-cocycle is stored as an additive record \texttt{rec(a,N)} with $\omega=e^{2\pi i\,a/N}$.
\item An anyon is stored as the list \texttt{[g,character,class,label]} of a representative $g$ of its conjugacy class, the character of its irreducible projective representation of the centralizer of $g$, stored as a dictionary from the elements of the centralizer to the character values, the conjugacy class of $g$, and the label of the anyon.
\end{itemize}
The data of $D^\omega(G)$ that the other functions take as input is:
\begin{itemize}
\item \texttt{G}: a finite group built by \texttt{MakeGroup}.
\item \texttt{glG}: the list \texttt{[elements,labels,gens]} that names every element of $G$ in terms of labeled generators.
\item \texttt{aG}: the list of anyons, whose order fixes the indexing of every vector and matrix over anyons.
\item \texttt{labels}: the names of the anyons as strings of characters, such as \texttt{m},\texttt{e} or \texttt{([a],E(4))}.
\item \texttt{H3}: the list of integers $[n_1,\dots,n_r]$ for $H^3(G,U(1))=\Z_{n_1}\times\dots\times\Z_{n_r}$, which fix the coordinates of a 3-cocycle class $[\omega]$.
\item \texttt{om}: the $3$-cocycle, bound only when it is non-trivial.
\item \texttt{S}, \texttt{T}: the modular matrices, indexed like \texttt{aG} and computed only when requested with \texttt{rec(ST:=true)}.
\end{itemize}
The algebra records returned by \texttt{AlgebrasWithAnyons} contain:
\begin{itemize}
\item \texttt{n}: the position of the algebra in the list.
\item \texttt{A}: the algebra as the list above.
\item \texttt{name}: a one-line name such as \texttt{A(S3=<r,s>, C3=<r>, 1, 1)}.
\item \texttt{H}, \texttt{F}: the two subgroups $H\subseteq G$ and $F\triangleleft H$.
\item \texttt{K}: the structure of the quotient $H/F$, the group of the reduced topological order.
\item \texttt{dim}: the dimension $\dim A=|G|\,|F|/|H|$.
\item \texttt{Lag}: \texttt{true} for a Lagrangian algebra $A$ with $\dim A=|G|$.
\item \texttt{na}: the multiplicities $n_a$ of \eqref{eq:anyons}, indexed as the anyons \texttt{th.aG}.
\item \texttt{decomposition}: the same decomposition as a string of anyon labels.
\item \texttt{labels}: a copy of \texttt{th.labels}.
\end{itemize}
We will discuss the main code functions in the example section \ref{sec:examples}.

\section{Examples}
\label{sec:examples}
All AlgebrasTQD code files (both the source and examples) are documented at the top, and above each function there is a header with its purpose, inputs and outputs.
In this Section, we illustrate how AlgebrasTQD can be used with explicit example scripts. Each example can be used as a starting point for a new calculation by changing the group and cocycle data or the types of algebras. The AlgebrasTQD directory \texttt{examples/} contains the ten numbered scripts of Table~\ref{tab:examples} and the cocycle tables \texttt{G32-43-omega.g} and \texttt{GL23-om-data.g} for examples $7$ and $8$. 

\begin{table}
\centering
\small
\renewcommand{\arraystretch}{1.3}
\begin{tabular}{|l|m{0.56\textwidth}|c|}
\hline
Script in \texttt{examples/} & Calculation and main functions & Time \\
\hline\hline
\texttt{0\_load.g} & Loads the code, lists the main functions. & 1\,s \\
\hline
\texttt{1\_Abelian.g} & Toric code, double semion, twisted $\Z_2^2$, $D(\Z_4)$ \newline
(\texttt{TQDTheory}, \texttt{ShowAnyons}). & 1\,s \\
\hline
\texttt{2\_non-Abelian.g} & Anyons, $S$, $T$ and fusion of $D(S_3)$, $D(D_8)$ \newline
(\texttt{MakeGroup}, \texttt{ShowTQDTheory}, \texttt{Fusion}). & 2\,s \\
\hline
\texttt{3\_cocycles.g} & $H^2$, $H^3$, custom cocycles, trivializations \newline (\texttt{H3Invariants}, \texttt{omaut}, \texttt{3coctest}, \texttt{gammaf}). & 2\,s \\
\hline
\texttt{4\_algebras.g} & Every algebra with its anyons, $D^{\om_\III}(\Z_2^3)$ and $D(D_8)$ (\texttt{AlgebrasWithAnyons}, \texttt{ShowAlgData}, \texttt{AlgebraClasses}, \texttt{AlgAnyonsProducts}). & 4\,s \\
\hline
\texttt{5\_interface\_map.g} & Anyons and algebras of the toric code and the double semion, electric-magnetic duality, canonical maps of anyons of $D(A_4)$ (\texttt{InterfaceMap}, \texttt{AutomorphismsAnyons}). & 3\,s \\
\hline
\makecell[l]{\texttt{6\_algebra\_classes}\\\texttt{\_and\_inclusions.g}} & Classes, Hasse diagrams and phases of $D(A_4)$ and its twisted doubles (\texttt{AlgebraClasses}, \texttt{Hasse}, \texttt{PhaseLabels}). & 2\,s \\
\hline
\texttt{7\_twin\_algebras\_G32.g} & The twin algebras of $G_{32}$, with trivial and non-trivial $\omega$ \cite{Gai:2026hjk} \newline (\texttt{AlgebraClasses}, \texttt{Hasse}). & 3\,min \\
\hline
\texttt{8\_twin\_algebras\_GL23.g} & The twin phases of $GL(2,3)^\omega$ \cite{Warman:2026gfz} \newline (\texttt{AlgebraClasses}, \texttt{InterfaceMap}, \texttt{PhaseLabels}). & 2\,min \\
\hline
\makecell[l]{\texttt{9\_quantum\_computing}\\\texttt{\_D4N.g}} & $T^{1/N}$ gates of $D(D_{4N})$ codes \cite{Huang:2025ump,Warman:2025hov} \newline (\texttt{D4NSurgeryGate}, \texttt{SPTStackingGate}). & 2\,min \\
\hline
\end{tabular}
\caption{The example scripts with the calculation each performs, the main functions it uses, and its approximate run time on one core of a personal computer. The times of examples $7$ and $8$ include the optional calculations with a non-trivial cocycle and the twin algebra sweep for groups of order up to $48$.}
\label{tab:examples}
\end{table}

To begin, start GAP from the AlgebrasTQD-main directory as explained at the start of Sec.~\ref{sec:install}:
\begin{code}
./gap
\end{code}
Then the user can load AlgebrasTQD
\begin{gapcode}
Read("AlgebrasTQD.g");
\end{gapcode}
or run any line below for the corresponding example file:
\begin{gapcode}
Read("examples/0_load.g");
Read("examples/1_Abelian.g");
Read("examples/2_non-Abelian.g");
Read("examples/3_cocycles.g");
Read("examples/4_algebras.g");
Read("examples/5_interface_map.g");
Read("examples/6_algebra_classes_and_inclusions.g");
Read("examples/7_twin_algebras_G32.g");
Read("examples/8_twin_algebras_GL23.g");
Read("examples/9_quantum_computing_D4N.g");
\end{gapcode}
We will explain each example in detail below. In this section, all code blocks with the GAP logo are executable in GAP, while the grey blocks without it are code outputs.

One can first check that the package loads correctly by running
\begin{gapcode}
Read("examples/0_load.g");
\end{gapcode}
which prints
\begin{gapout}
AlgebrasTQD loaded. Main functions:
  MakeGroup
  MakeGroupLabels
  RelabelGroup
  GroupElement
  GroupLabel
  ShowGroup
  TQDTheory
  GaugeableSubgroups
  Fusion
  ShowFusion
  ShowAnyons
  AlgebrasWithAnyons
  ShowAlgebrasWithAnyons
  AlgebraClasses
  Hasse
  PhaseLabels
  InterfaceMap
  RoughMergeSplitStep
  RoughMergeSplit
  SPTStackingGate
\end{gapout}

\subsection{Abelian Quantum Doubles}
\label{sec:ex1}
Let's start with the simplest quantum doubles, those for Abelian groups, with the example script 
\begin{gapcode}
Read("examples/1_Abelian.g");
\end{gapcode}
A twisted quantum double topological theory is created by \texttt{TQDTheory} from a group, its name or ID in the small groups library, and a cocycle given by its coordinates in $H^3(G,U(1))$ or as a record \texttt{rec(a,N)}. We illustrate this for the toric code $D(\Z_2)$ and the double semion $D^\omega(\Z_2)$ theory (GAP calls the cyclic groups $\Z_n$ by $C_n$).
\begin{gapcode}
ShowAnyons(TQDTheory("C2"));         # toric code
ShowAnyons(TQDTheory("C2", [1]));    # double semion
\end{gapcode}
where for the double semion, \texttt{("C2", [1])} denotes the group $\Z_2$ with non-trivial 3-cocycle class (written additively here as $[1]$). Its output is
\begin{gapout}
Toric code D(C2)                 Double semion
1   1     d=1   theta=1          1   ([1],1)         d=1   theta=1
2   e     d=1   theta=1          2   ([1],1 a=-1)    d=1   theta=1
3   m     d=1   theta=1          3   ([a],E(4))      d=1   theta=E(4)
4   me    d=1   theta=-1         4   ([a],E(4)^3)    d=1   theta=-E(4)
\end{gapout}
Next, we turn to $D^{\om_{\text{II}}}(\Z_2\times\Z_2)\cong D(\Z_4)$: 
\begin{gapcode}
K := MakeGroup("C2 x C2");;
omII := First(omaut(K), om -> Length(GaugeableSubgroups(om)) = 3);;
ShowAnyons(TQDTheory(K, omII));      # type-II twisted C2 x C2
ShowAnyons(TQDTheory("C4"));
\end{gapcode}
\texttt{MakeGroup} attaches generator labels, $a$ and $b$ for $\Z_2\times\Z_2$, then \texttt{omaut} returns one 3-cocycle for each automorphism orbit of $H^3(\Z_2\times\Z_2,U(1))=\Z_2^3$ and \texttt{GaugeableSubgroups} the subgroups (up to conjugacy) on which a 3-cocycle is trivial. The type-II cocycle $\omega_{\rm II}(g_1,g_2,g_3)=(-1)^{x_1y_2y_3}$ for $g=a^xb^y$ is trivial on three subgroups of $\Z_2\times\Z_2$: $1$, $\langle a\rangle$ and $\langle b\rangle$. For an untwisted Abelian group the anyon labels are magnetic and electric, $m^ie^j$, and for a twisted theory an anyon $([g],\rho)$ is written by the values of $\rho$ on the generators of the centralizer. The type-II twisted $\Z_2\times\Z_2$ has sixteen Abelian anyons, whose twelve anyons with non-trivial flux have order four under fusion, as in $D(\Z_4)$, which the script prints last.

\subsection{Non-Abelian Quantum Doubles}
\label{sec:ex2}

The second example 
\begin{gapcode}
Read("examples/2_non-Abelian.g");
\end{gapcode}
uses non-Abelian groups. Let's start with $S_3$:
\begin{gapcode}
G := MakeGroup("S3");;
ShowGroup(G);
th := ShowTQDTheory("S3");;
Display(th.S);  Display(th.T);
Print(Fusion(th, "([s],1)", "([s],1)"), "\n");
\end{gapcode}
\texttt{MakeGroup("S3")} labels the rotation $r$ of order $3$ and the reflection $s$, of $S_3$, then \texttt{ShowGroup} prints the relations and the label of every element. \texttt{ShowTQDTheory} builds the twisted quantum double theory with its modular data and prints the anyons and their fusion rules.
The eight anyons of $D(S_3)$ are the three charges $([1],\rho)$ for $\rho\in\Irr(S_3)$, then two anyons with fluxes in the class of $s$ and  three with fluxes in the class of $r$, labeled by the conjugacy class and irrep of the centralizer ($\Z_2$ and $\Z_3$ respectively):
\begin{gapout}
1   ([1],1)        d=1   theta=1
2   ([1],1 s=-1)   d=1   theta=1
3   ([1],2 r=-1)   d=2   theta=1
4   ([s],1)        d=3   theta=1
5   ([s],-1)       d=3   theta=-1
6   ([r],1)        d=2   theta=1
7   ([r],E(3)^2)   d=2   theta=E(3)^2
8   ([r],E(3))     d=2   theta=E(3)
\end{gapout}
where $\texttt{([1],1 s=-1)}$ and $\texttt{([1],2\ r{=}{-}1)}$ are the sign and the two-dimensional representations of $S_3$ respectively.
\texttt{Fusion} takes two anyons by label or index and computes their product from $S$ by the Verlinde formula, e.g. here we compute the fusion of the anyon of dimension 3 and spin 1 $([s],1)$ with itself:
\begin{gapcode}
n := Length(th.aG);;
d := List(th.aG, a -> dima(a, One(G)));;
k := First([1..n], i -> d[i] = 3 and th.T[i][i] = 1);;
Print(th.labels[k], " x ", th.labels[k], " = ", Fusion(th, k, k), "\n\n");
\end{gapcode}
and the code prints
\begin{gapout}
([s],1) x ([s],1) = ([1],1)+([1],2 r=-1)+([r],1)+([r],E(3)^2)+([r],E(3))
\end{gapout} 
Next, we show how to re-label group elements of the dihedral group $D_8$ of order 8:
\begin{gapcode}
D8 := MakeGroup("D8", ["c","a","b"], ["s","sr","sr^3"]);;
ShowAnyons(TQDTheory(D8));
\end{gapcode}
This code labels $D_8=(\Z_2^a\times\Z_2^b)\rtimes\Z_2^c$ as in~\cite{Bhardwaj:2024qrf}, from its standard notation $D_8=\Z_4^r\rtimes\Z_2^s$ with rotation $r$ and reflection $s$. The two presentations are related by $c=s$, $a=sr$ and $b=sr^3$, and the code prints the $22$ anyons of $D(D_8)$. We will encounter a third presentation of $D(D_8)$, as $D^{\om_\III}(\Z_2^3)$, in Secs.~\ref{sec:ex3} and~\ref{sec:ex4}.

\subsection{Cocycles}
\label{sec:ex3}
The script 
\begin{gapcode}
Read("examples/3_cocycles.g");
\end{gapcode}
shows how the 3-cocycles $\omega$ can be computed by the code, or defined by the user and tested. 
\texttt{H3Invariants} returns $H^3(G,U(1))=\Z_{n_1}\times\dots\times\Z_{n_r}$ as the list $[n_1,\dots,n_r]$. For the actual 3-cocycle functions of 3 group elements, \texttt{omf} returns one $3$-cocycle per class and \texttt{omaut} one per orbit under the automorphisms of $G$, using also functions from \cite{HAP,mignard2017moritaequivalencepointedfusion}. The code
\begin{gapcode}
for name in [ "C2", "S3", "C2 x C2", "D8", "Q8" ] do
    G := MakeGroup(name);;
    Print(name, ": H^3 = ", H3Invariants(G), ", ", Length(omf(G)), " classes, ");
    Print(Length(omaut(G)), " up to automorphisms\n");
od;
\end{gapcode}
prints
\begin{gapout}
C2: H^3 = [ 2 ], 2 classes, 2 up to automorphisms
S3: H^3 = [ 6 ], 6 classes, 6 up to automorphisms
C2 x C2: H^3 = [ 2, 2, 2 ], 8 classes, 4 up to automorphisms
D8: H^3 = [ 2, 2, 4 ], 16 classes, 12 up to automorphisms
Q8: H^3 = [ 8 ], 8 classes, 8 up to automorphisms
\end{gapout}
A custom $3$-cocycle is a GAP record \texttt{rec(a,N)} with $\omega=e^{2\pi i\,a/N}$. We define the type-III cocycle of $\Z_2^3$
\be
\omega_{\rm III}(g_1,g_2,g_3)=(-1)^{x_1y_2z_3}
\ee
for group elements $g=a^xb^yc^z$, and check with \texttt{3coctest} that it is a normalized $3$-cocycle. 
\begin{gapcode}
K3 := MakeGroup("C2 x C2 x C2");;
a := GroupElement(K3, "a");; b := GroupElement(K3, "b");; c := GroupElement(K3, "c");;
xyz := g -> First(Cartesian([0,1], [0,1], [0,1]), e -> a^e[1]*b^e[2]*c^e[3] = g);;
omIII := rec(a := function(g1, g2, g3)
    return xyz(g1)[1]*xyz(g2)[2]*xyz(g3)[3] mod 2;
end, N := 2);;
3coctest([Elements(K3), omIII]);
\end{gapcode}
which returns
\begin{gapout}
true
\end{gapout}

An algebra $A(H,F,\gamma,\eps)$ can exist only if $\omega$ is trivial on $F$. \texttt{gammaf} returns the trivializations $\gamma$ (one per class of $H^2(F,U(1))$) or \texttt{fail}: we print the generators of the subgroups $F$, up to conjugacy, on which $\omega_{\rm III}$ is trivial, with their number of $\gamma$:
\begin{gapcode}
for F in List(ConjugacyClassesSubgroups(K3), Representative) do
    gs := gammaf([F, omIII]);;
    if gs <> fail then
        Print(List(GeneratorsOfGroup(F), g -> GroupLabel(K3, g)), ": ", Length(gs), "\n");
    fi;
od;
\end{gapcode}
where \texttt{<>} means ``not equal'' in GAP. The code prints, in three columns here,
\begin{gapout}
[  ]: 1          [ "c" ]: 1          [ "b", "c" ]: 2
[ "a" ]: 1       [ "ac" ]: 1         [ "a", "c" ]: 2
[ "b" ]: 1       [ "bc" ]: 1         [ "a", "b" ]: 2
[ "ab" ]: 1                          [ "ab", "ac" ]: 2
\end{gapout}
Every cyclic subgroup except $\langle abc\rangle$ admits one trivialization, four of the seven $\Z_2^2$ subgroups admit two, one per class of $H^2(\Z_2^2,U(1))=\Z_2$, and $\omega_{\rm III}$ is not trivial on $\Z_2^3$ itself. 

Finally, $D^{\omega_{\rm III}}(\Z_2^3)$ and $D(D_8)$ have the same number of anyons and the same topological spins, the diagonal of $T$, with their multiplicities:
\begin{gapcode}
thIII := TQDTheory(K3, omIII, rec(ST := true));;
thD8 := TQDTheory(MakeGroup("D8"), 0, rec(ST := true));;
Spins := th -> Collected(DiagonalOfMat(th.T));;
[ Length(thIII.aG), Length(thD8.aG), Spins(thIII) = Spins(thD8) ];
\end{gapcode}
returns \texttt{[ 22, 22, true ]}, as expected from the equivalence of the two theories, which we construct in Sec.~\ref{sec:ex4}. The script ends with \texttt{NormalizeCocycle}, which replaces a cocycle with $\omega(1,g,h)\neq1$ by a cohomologous normalized one, which is required by the algebra computation functions.

\subsection{Condensable Algebras}
\label{sec:ex4}
We have now seen how to set up and label groups, 3-cocycles and twisted quantum double theories. We can now compute the main objects of interest, condensable algebras with the example script 
\begin{gapcode}
Read("examples/4_algebras.g");
\end{gapcode}
\texttt{AlgebrasWithAnyons} returns every algebra $A(H,F,\gamma,\eps)$ as a record with its name, dimension, $K=H/F$ (the group for the reduced topological order) and anyon decomposition \eqref{eq:anyons}. \texttt{ShowAlgebrasWithAnyons} displays it by printing one line per algebra. For $D(S_3)$:
\begin{gapcode}
th := TQDTheory("S3");;
L := AlgebrasWithAnyons(th);;
ShowAlgebrasWithAnyons(L);
\end{gapcode}
prints
\begin{gapout}
1  dim=1  A(S3=<r,s>, 1, 1, 1)          K=S3  ([1],1)
2  dim=2  A(C3=<r>, 1, 1, 1)            K=C3  ([1],1)+([1],1 s=-1)
3  dim=3  A(S3=<r,s>, C3=<r>, 1, 1)     K=C2  ([1],1)+([r],1)
4  dim=3  A(C2=<s>, 1, 1, 1)            K=C2  ([1],1)+([1],2 r=-1)
5  dim=6  A(S3=<r,s>, S3=<r,s>, 1, 1)   K=1   ([1],1)+([s],1)+([r],1)
6  dim=6  A(C3=<r>, C3=<r>, 1, 1)       K=1   ([1],1)+([1],1 s=-1)+2([r],1)
7  dim=6  A(C2=<s>, C2=<s>, 1, 1)       K=1   ([1],1)+([1],2 r=-1)+([s],1)
8  dim=6  A(1, 1, 1, 1)                 K=1   ([1],1)+([1],1 s=-1)+2([1],2 r=-1)
\end{gapout}
The four Lagrangian algebras of dimension $|S_3|=6$ are the gapped boundaries, with $A(1,1,1,1)=\cL(1,1)$ the Dirichlet boundary for the $S_3$ symmetry and $A(S_3,S_3,1,1)=\cL(S_3,1)$ the Neumann boundary for $\Rep(S_3)$. 

\texttt{TQDAlgebras} can also filter the algebras, e.g. restrict to the Lagrangian ones with \texttt{TQDAlgebras(th, rec(Lag := true))} or those of a given quantum dimension. \texttt{FindAlgebras} finds the algebras that contain given anyons, and \texttt{ShowAlgData} prints one algebra with its reduced topological order. Finally, \texttt{AlgebraDisplayData} returns the same strings in \LaTeX{}, in the notation of~\cite{Gai:2026hjk}.

A 3-cocycle twist removes the algebras for which $\omega|_F$ is non-trivial in cohomology. For example, it is well-known that the untwisted $D(\Z_2\times\Z_2)$ has $16$ algebras and the type-II twisted double of Sec.~\ref{sec:ex1} only $7$. The long form of \texttt{ShowAlgData} also prints the non-trivial values of $\gamma$, $\eps$ and of the twist $\alpha$ of the reduced topological order $D^\alpha(H/F)$, computed by \texttt{alphaf}:
\begin{gapcode}
K := MakeGroup("C2 x C2");;
omII := First(omaut(K), om -> Length(GaugeableSubgroups(om)) = 3);;
thK := TQDTheory(K, omII);;
[ Length(TQDAlgebras(TQDTheory(K))), Length(TQDAlgebras(thK)) ];
ab := GroupElement(K, "a")*GroupElement(K, "b");;
r := First(AlgebrasWithAnyons(thK), r -> r.H = Group(ab) and IsTrivial(r.F));;
ShowAlgData(thK, r.A, rec(opP := "long"));;
\end{gapcode}
which prints
\begin{gapout}
[ 16, 7 ]
H=C2=<ab>   F=1  H/F=C2
alpha(f1*f2,f1*f2,f1*f2)=-1
  ([1],1)+([1],1 a=-1 b=-1)
\end{gapout}
which shows that condensing $A(\langle ab\rangle,1,1,1)$ produces the double semion theory $D^\alpha(\Z_2)$. Here, \texttt{f1*f2} id GAP's notation for the generator of the $\Z_2$ group of the reduced topological order.

\paragraf{The type-III twisted double and $D(D_8)$.} The script then builds $\Z_2^3=\langle R,G,B\rangle$ with the type-III cocycle of Sec.~\ref{sec:ex3} and names its $22$ anyons as in~\cite{Iqbal:2023wvm,Bhardwaj:2024qrf}: the electric charges $e_C$, the bosons and fermions with non-trivial flux $m_C$ and $f_C$, and the semions $s_{RGB}$, $\bar s_{RGB}$, where $C$ is a set of colors. \texttt{AlgebraClasses} sorts its $54$ algebra data into the $38$ inequivalent algebras of~\cite{Bhardwaj:2024qrf}, $11$ of them Lagrangian, since data related by the phases $\Omega_g$ of \eqref{eq:isomorphism} define isomorphic algebras \cite{Natale2017,Gai:2026hjk}. A Lagrangian algebra of $\cZ(\Vec_{\Z_2^3}^{\omega_{\rm III}})\boxtimes\ol{\cZ(\Vec_{D_8})}$ decomposes into a permutation of the anyons that preserves the modular $S$ and $T$ matrices, owing to the equivalence $\cZ(\Vec_{\Z_2^3}^{\omega_{\rm III}})\cong\cZ(\Vec_{D_8})$~\cite{deWildPropitius:1995cf,Iqbal:2023wvm,Coste:2000tq,Warman:2025hov}. The script computes and prints the $38$ algebras of $D(D_8)$ in these labels (the order of the lines and the subgroup representatives in the names can differ between GAP sessions),
\begin{gapout}
dim=1  A(D8=<r,s>, 1, 1, 1)  1
dim=2  A(D8=<r,s>, C2=<r^2>, 1, 1)  1+e_G
dim=2  A(D8=<r,s>, C2=<r^2>, 1, eps)  1+e_RG
dim=2  A(D8=<r,s>, C2=<r^2>, 1, eps)  1+e_GB
dim=2  A(D8=<r,s>, C2=<r^2>, 1, eps)  1+e_RGB
dim=2  A(C2 x C2=<rs,sr>, 1, 1, 1)  1+e_R
dim=2  A(C2 x C2=<s,r^2>, 1, 1, 1)  1+e_B
dim=2  A(C4=<r>, 1, 1, 1)  1+e_RB
dim=4  A(D8=<r,s>, C2 x C2=<rs,sr>, 1, 1)  1+m_B+e_G
dim=4  A(D8=<r,s>, C2 x C2=<rs,sr>, gamma, eps)  1+m_GB+e_RGB
dim=4  A(D8=<r,s>, C2 x C2=<s,r^2>, 1, 1)  1+m_R+e_G
dim=4  A(D8=<r,s>, C2 x C2=<s,r^2>, gamma, eps)  1+m_RG+e_RGB
dim=4  A(D8=<r,s>, C4=<r>, 1, 1)  1+e_G+m_RB
dim=4  A(D8=<r,s>, C4=<r>, 1, eps)  1+e_RGB+m_RB
dim=4  A(C2 x C2=<rs,sr>, C2=<rs>, 1, 1)  1+e_R+m_B
dim=4  A(C2 x C2=<rs,sr>, C2=<rs>, 1, eps)  1+e_R+m_GB
dim=4  A(C2 x C2=<rs,sr>, C2=<r^2>, 1, 1)  1+e_R+e_G+e_RG
dim=4  A(C2 x C2=<rs,sr>, C2=<r^2>, 1, eps)  1+e_R+e_RGB+e_GB
dim=4  A(C2 x C2=<s,r^2>, C2=<s>, 1, 1)  1+e_B+m_R
dim=4  A(C2 x C2=<s,r^2>, C2=<s>, 1, eps)  1+e_B+m_RG
dim=4  A(C2 x C2=<s,r^2>, C2=<r^2>, 1, 1)  1+e_B+e_G+e_GB
dim=4  A(C2 x C2=<s,r^2>, C2=<r^2>, 1, eps)  1+e_B+e_RG+e_RGB
dim=4  A(C4=<r>, C2=<r^2>, 1, 1)  1+e_RB+e_G+e_RGB
dim=4  A(C4=<r>, C2=<r^2>, 1, eps)  1+e_RB+e_RG+e_GB
dim=4  A(C2=<rs>, 1, 1, 1)  1+e_R+m_G
dim=4  A(C2=<s>, 1, 1, 1)  1+e_B+m_G
dim=4  A(C2=<r^2>, 1, 1, 1)  1+e_B+e_R+e_RB
dim=8  A(D8=<r,s>, D8=<r,s>, 1, 1)  1+m_B+m_R+e_G+m_RB
dim=8  A(D8=<r,s>, D8=<r,s>, gamma, eps)  1+m_GB+m_RG+e_RGB+m_RB
dim=8  A(C2 x C2=<rs,sr>, C2 x C2=<rs,sr>, 1, 1)  1+e_R+2m_B+e_G+e_RG
dim=8  A(C2 x C2=<rs,sr>, C2 x C2=<rs,sr>, gamma, eps)  1+e_R+2m_GB+e_RGB+e_GB
dim=8  A(C2 x C2=<s,r^2>, C2 x C2=<s,r^2>, 1, 1)  1+e_B+2m_R+e_G+e_GB
dim=8  A(C2 x C2=<s,r^2>, C2 x C2=<s,r^2>, gamma, eps)  1+e_B+2m_RG+e_RG+e_RGB
dim=8  A(C4=<r>, C4=<r>, 1, 1)  1+e_RB+e_G+e_RGB+2m_RB
dim=8  A(C2=<rs>, C2=<rs>, 1, 1)  1+e_R+m_G+m_B+m_GB
dim=8  A(C2=<s>, C2=<s>, 1, 1)  1+e_B+m_G+m_R+m_RG
dim=8  A(C2=<r^2>, C2=<r^2>, 1, 1)  1+e_B+e_R+e_RB+e_G+e_RG+e_RGB+e_GB
dim=8  A(1, 1, 1, 1)  1+e_B+e_R+e_RB+2m_G
The 38 algebras of D(D8) match the 38 classes of the type-III double
\end{gapout}
and checks that they have the anyons of the $38$ classes of $D^{\omega_{\rm III}}(\Z_2^3)$, which has no twin algebras. 

The number of algebra classes of every twisted double of $D_8$, with inequivalent $\om\in H^3(D_8,U(1))$ up to automorphisms, is computed by
\begin{gapcode}
D8 := MakeGroup("D8");;
SortedList(List(omaut(D8), om -> Length(AlgebraClasses(TQDTheory(D8, om)).classes)));
\end{gapcode}
which returns
\begin{gapout}
[ 8, 8, 9, 9, 10, 10, 11, 15, 19, 22, 27, 38 ]
\end{gapout}

\subsection{Interfaces and Maps of Anyons}
\label{sec:ex5}

The script
\begin{gapcode}
Read("examples/5_interface_map.g");
\end{gapcode}
computes interfaces between topological orders. Each algebra of the toric code is either the identity algebra or Lagrangian, so its only interfaces are the gapped boundaries and the invertible ones. \texttt{AutomorphismsAnyons} finds the invertible interfaces, as the Lagrangian algebras of $\cZ(\Vec_{\Z_2})\boxtimes\ol{\cZ(\Vec_{\Z_2})}$ whose decomposition is a permutation of the anyons:
\begin{gapcode}
thC2 := TQDTheory("C2");;
ShowAlgebrasWithAnyons(AlgebrasWithAnyons(thC2));
List(AutomorphismsAnyons(thC2), r -> thC2.labels{r.perm});
\end{gapcode}
which prints
\begin{gapout}
1  dim=1  A(C2=<a>, 1, 1, 1)        K=C2  1
2  dim=2  A(C2=<a>, C2=<a>, 1, 1)   K=1   1+m
3  dim=2  A(1, 1, 1, 1)             K=1   1+e
[ [ "1", "m", "e", "me" ], [ "1", "e", "m", "me" ] ]
\end{gapout}
The last line gives the images of the anyons $1$, $e$, $m$, $me$ under the electric-magnetic duality, which exchanges the two gapped boundaries $1+m$ and $1+e$, and under the identity. This is then checked with \texttt{AutomorphismsDG}, which uses the condition~\cite{Davydov2009ModularIF}, and also lists the anyons and algebras of the double semion. These are very simple examples, but they can easily be extended to larger quantum doubles $D(G)$.
\begin{table}
\centering
\scriptsize
\renewcommand{\arraystretch}{1.25}
\begin{tabular}{|c|c|}
\hline
$D(\Z_2\times\Z_2)$ & \nameref{A4:2} \\
\hline\hline
$e_1$, $e_2$, $e_1e_2$ & $([1],R_3)$ \\
$m_1$, $m_2$, $m_1m_2$ & $([b],1_{+,+})$ \\
$m_2e_1$, $m_1e_2$, $m_1m_2e_1e_2$ & $([b],1_{+,-})$ \\
$m_2e_2$, $m_1e_1e_2$, $m_1m_2e_1$ & $([b],1_{-,+})$ \\
$m_1e_1$, $m_2e_1e_2$, $m_1m_2e_2$ & $([b],1_{-,-})$ \\
\hline
\end{tabular}

\vspace{2ex}
\begin{tabular}{|c|c|c|c|}
\hline
$D(\Z_3)$ & \nameref{A4:3} & \nameref{A4:4} & \nameref{A4:5} \\
\hline\hline
$e$ & $([1],R_1)\oplus([b],1_{+,+})$ & $([1],R_1)\oplus([b],1_{+,-})$ & $([1],R_1)\oplus([1],R_3)$ \\
$e^2$ & $([1],R_2)\oplus([b],1_{+,+})$ & $([1],R_2)\oplus([b],1_{+,-})$ & $([1],R_2)\oplus([1],R_3)$ \\
\hline
$m$, $me$, $me^2$ & \multicolumn{3}{c|}{$([a],1)$, $([a],\rho)$, $([a],\rho^2)$} \\
$m^2$, $m^2e$, $m^2e^2$ & \multicolumn{3}{c|}{$([a^2],1)$, $([a^2],\rho^2)$, $([a^2],\rho)$} \\
\hline
\end{tabular}

\vspace{2ex}
\begin{tabular}{|c|c|c|c|}
\hline
$D(\Z_2)$ & \nameref{A4:6} & \nameref{A4:7} & \nameref{A4:8} \\
\hline\hline
$e$ & $([1],R_3)\oplus([b],1_{+,-})$ & $([1],R_3)\oplus([b],1_{+,+})$ & $2([1],R_3)$ \\
$m$ & $2([b],1_{+,+})$ & $2([b],1_{+,-})$ & $([b],1_{+,+})\oplus([b],1_{+,-})$ \\
\hline
$me$ & \multicolumn{3}{c|}{$([b],1_{-,+})\oplus([b],1_{-,-})$} \\
\hline
\end{tabular}
\caption{The canonical interface maps of the algebras of $D(A_4)$ in Table~\ref{tab:A4alg}, computed by \texttt{InterfaceMap} and written with the anyon notation of Table~\ref{tab:A4anyons}. Each column gives the image of the anyons of the reduced topological order across the SymTFT interface to anyons of $D(A_4)$.}
\label{tab:A4maps}
\end{table}
For a non-Lagrangian algebra, \texttt{InterfaceMap} computes the folded Lagrangian algebra $\cL^\folded(H^{\diag},\varphi)$ of \eqref{eq:map} and prints the anyons $b$ of the reduced topological order and their images $\bigoplus_an_{ab}\,a$ across the interface. It returns the transpose of the matrix $n$ of \eqref{eq:map}, with rows labeled by the reduced anyons and columns by the original ones. 
To illustrate its use, we take $D(A_4)$, which we will discuss in detail in Sec.~\ref{sec:ex6}, with the generators $a=(1\,2\,3)$ and $b=(1\,2)(3\,4)$. The anyons and algebras in $D(A_4)$ are listed in Tables~\ref{tab:A4anyons} and~\ref{tab:A4alg}. For example:
\begin{gapcode}
G := AlternatingGroup(4);;
MakeGroupLabels([G, [(1,2,3), (1,2)(3,4)], ["a","b"]]);;
thA4 := TQDTheory(G);;
M := AlgebraClasses(thA4);;
if Length(M.classes) <> 15 then
    ErrorNoReturn("D(A4) has ", Length(M.classes), " algebra classes instead of 15");
fi;
for c in Filtered(M.classes, c -> c.dim < Size(G)) do
    InterfaceMap(thA4, c.algebras[1]);;
od;
\end{gapcode}
returns the results summarized in Tab.~\ref{tab:A4maps}.

\subsection{Algebras, Hasse diagrams and Phases from $D^\om(A_4)$}
\label{sec:ex6}

The next example script
\begin{gapcode}
Read("examples/6_algebra_classes_and_inclusions.g");
\end{gapcode}
computes the isomorphism classes of algebras of $D^\om(A_4)$ with three choices of 3-cocycle $\omega$ (trivial, order-2 and order-3) and writes the Hasse diagrams of Figures~\ref{fig:HasseA4}, \ref{fig:HasseA4om2} and~\ref{fig:HasseA4om3} whose algebra labels are linked to the algebra tables. We present $A_4$ as:
\be\ba
    a=(1\,2\,3)\,,\qquad b=(1\,2)(3\,4)\,, \cr
    A_4=\langle a,b\mid a^3=b^2=1,\ b(aba^2)=(aba^2)b=a^2ba\rangle\,,
\ea\ee
with conjugacy classes $[1]$, $[a]=\{a,ab,ba,bab\}$, $[a^2]=\{a^2,a^2b,ba^2,aba\}$ and $[b]=\{b,aba^2,a^2ba\}$, centralizers $A_4$, $\Z_3=\langle a\rangle$, $\Z_3$ and $\Z_2\times\Z_2=\langle b,aba^2\rangle$, and irreducible representations $1$, $R_1$, $R_2$ with $R_{1,2}(a)=\zeta_3^{\pm1}$, and $R_3$ of dimension $3$.

The third cohomology group $H^3(A_4,U(1))=\Z_6$ has one class $\omega_2$ of order two, which we select by the order of its class, and a class $\omega_3$ of order three which can be pulled back from the quotient $\Z_3=A_4/(\Z_2\times\Z_2)=\langle m\rangle$, with its standard 3-cocycle \cite{Coste:2000tq} 
\be
\alpha(m^l,m^j,m^k)=\exp\bigl(\frac{2\pi i}{9}\,l(j+k-\langle j+k\rangle_3)\bigr)\,,\quad \langle j+k\rangle_3=(j+k)\bmod3 \,.
\ee
In the code, we use \texttt{gn(g)} for the exponent $l$ of $m^l\in \Z_3$ and define the pullback with:
\begin{gapcode}
G := AlternatingGroup(4);;
a := (1,2,3);; b := (1,2)(3,4);;
MakeGroupLabels([G, [a, b], ["a","b"]]);;
a := GroupElement(G, "a");; b := GroupElement(G, "b");;
n := H3Invariants(G)[1];;                   # H^3(A4,U(1)) = Z6
om2 := H3Cocycle(G, [ First([1..n-1], v -> n/Gcd(v, n) = 2) ]);;
p := NaturalHomomorphismByNormalSubgroup(G, Group(b, a*b*a^2));;   # A4 -> Z3
gn := g -> First([0..2], k -> Image(p, g) = Image(p, a)^k);;
om3 := rec(a := function(g1, g2, g3)
    return (gn(g1)*(gn(g2) + gn(g3) - ((gn(g2) + gn(g3)) mod 3))) mod 9;
end, N := 9);;
3coctest([Elements(G), om3]);
\end{gapcode}
which returns \texttt{true}. 

For each choice of $\omega$, \texttt{AlgebraClasses} sorts the algebras in $D^\om(A_4)$ into their classes and \texttt{Hasse} computes their subalgebra inclusions. Then, \texttt{PhaseLabels} names the phases relative to the Dirichlet boundary $\cL(1,1)=A(1,1,1,1)$, which realizes the $A_4$ symmetry with the anomaly $\omega$. The code
\begin{gapcode}
for om in [ 0, om2, om3 ] do
    th := TQDTheory(G, om, rec(ST := true));;
    M := AlgebraClasses(th);;  Hs := Hasse(th, M);;
    sym := PositionProperty(M.classes, c -> IsTrivial(c.algebras[1][1]));;
    Print(Length(M.classes), " algebra classes, L(1,1) is the algebra class ", sym, "\n");
    Print(PhaseLabels(th, sym, Hs), "\n");
od;
\end{gapcode}
prints
\begin{gapout}
15 algebra classes, L(1,1) is the algebra class 15
[ "gSPT", "gSSB", "gSPT", "gSPT", "gSSB", "gSSB", "gSSB", "gSSB", "SPT", "SPT", "SSB", "SSB", "SSB", "SSB", "SSB" ]
6 algebra classes, L(1,1) is the algebra class 6
[ "igSPT", "igSSB", "gSSB", "igSSB", "SSB", "SSB" ]
12 algebra classes, L(1,1) is the algebra class 12
[ "igSPT", "gSSB", "igSPT", "igSPT", "igSSB", "gSSB", "gSSB", "gSSB", "SSB", "SSB", "SSB", "SSB" ]
\end{gapout}
The phases are listed in the order of the classes, and the script writes each Hasse diagram with its linked algebra table as \LaTeX{} files with \texttt{HasseTableTeX}.

\begin{table}
\centering
\footnotesize
\renewcommand{\arraystretch}{0.9}
\begin{tabular}{|c|c|c|}
\hline
$D(A_4)$ & Dim & $T$ \\
\hline\hline
$([1],1)$ & $1$ & $1$ \\
$([1],R_1)$ & $1$ & $1$ \\
$([1],R_2)$ & $1$ & $1$ \\
$([1],R_3)$ & $3$ & $1$ \\
$([b],1_{+,+})$ & $3$ & $1$ \\
$([b],1_{+,-})$ & $3$ & $1$ \\
$([b],1_{-,+})$ & $3$ & $-1$ \\
$([b],1_{-,-})$ & $3$ & $-1$ \\
$([a],1)$ & $4$ & $1$ \\
$([a],\rho)$ & $4$ & $\zeta_3$ \\
$([a],\rho^2)$ & $4$ & $\zeta_3^2$ \\
$([a^2],1)$ & $4$ & $1$ \\
$([a^2],\rho)$ & $4$ & $\zeta_3$ \\
$([a^2],\rho^2)$ & $4$ & $\zeta_3^2$ \\
\hline
\end{tabular}
\qquad
\begin{tabular}{|c|c|c|}
\hline
$D^{\omega_2}(A_4)$ & Dim & $T$ \\
\hline\hline
$([1],1)$ & $1$ & $1$ \\
$([1],R_1)$ & $1$ & $1$ \\
$([1],R_2)$ & $1$ & $1$ \\
$([1],R_3)$ & $3$ & $1$ \\
$([b],1_{i,i})$ & $3$ & $i$ \\
$([b],1_{i,-i})$ & $3$ & $i$ \\
$([b],1_{-i,i})$ & $3$ & $-i$ \\
$([b],1_{-i,-i})$ & $3$ & $-i$ \\
$([a],1)$ & $4$ & $1$ \\
$([a],\rho)$ & $4$ & $\zeta_3$ \\
$([a],\rho^2)$ & $4$ & $\zeta_3^2$ \\
$([a^2],1)$ & $4$ & $1$ \\
$([a^2],\rho)$ & $4$ & $\zeta_3$ \\
$([a^2],\rho^2)$ & $4$ & $\zeta_3^2$ \\
\hline
\end{tabular}
\qquad
\begin{tabular}{|c|c|c|}
\hline
$D^{\omega_3}(A_4)$ & Dim & $T$ \\
\hline\hline
$([1],1)$ & $1$ & $1$ \\
$([1],R_1)$ & $1$ & $1$ \\
$([1],R_2)$ & $1$ & $1$ \\
$([1],R_3)$ & $3$ & $1$ \\
$([b],1_{+,+})$ & $3$ & $1$ \\
$([b],1_{+,-})$ & $3$ & $1$ \\
$([b],1_{-,+})$ & $3$ & $-1$ \\
$([b],1_{-,-})$ & $3$ & $-1$ \\
$([a],\zeta_9)$ & $4$ & $\zeta_9$ \\
$([a],\zeta_9^4)$ & $4$ & $\zeta_9^4$ \\
$([a],\zeta_9^7)$ & $4$ & $\zeta_9^7$ \\
$([a^2],\zeta_9)$ & $4$ & $\zeta_9$ \\
$([a^2],\zeta_9^4)$ & $4$ & $\zeta_9^4$ \\
$([a^2],\zeta_9^7)$ & $4$ & $\zeta_9^7$ \\
\hline
\end{tabular}
\caption{The anyons of $D(A_4)$, $D^{\omega_2}(A_4)$ and $D^{\omega_3}(A_4)$ as printed by \texttt{ShowAnyons}, with their quantum dimensions and topological spins $T$. The code labels a charge by the values of its character on the generators of the centralizer: $([b],1_{s_1,s_2})$ is the one-dimensional (projective) representation of $\Z_2^{\langle b\rangle}\times\Z_2^{\langle aba^2\rangle}$ with characters $s_1$ at $b$ and $s_2$ at $aba^2$, printed as \texttt{([b],1 aba\^{}2=-1)} for $1_{+,-}$, and $\rho$ is the representation of $\Z_3^{\langle a\rangle}$ with $\rho(a)=\zeta_3$. With $\omega_3$ the fluxes in $[a]$ and $[a^2]$ carry $\omega_3(\cdot,\cdot|a)^{-1}$-projective characters with values $\zeta_9^k$ and fractional spins, and with $\omega_2$ the characters of the fluxes in $[b]$ take the values $\pm i$.}
\label{tab:A4anyons}
\end{table}

\begin{figure}
\centering
\begin{tikzpicture}[vertex/.style={draw,font=\small}, xscale=1.2, yscale=0.9]
 \begin{scope}[shift={(0,0)}]
    \foreach \coord/\i/\j in {
(0.,-2)/1/{\nameref{A4:1} \\ gSPT},
(1.98,-2.7)/2/{\nameref{A4:2} \\ $\Z_3$\! gSSB},
(-4.2,-4.5)/3/{\nameref{A4:3} \\ gSPT},
(-2.9,-4.5)/4/{\nameref{A4:4} \\ gSPT},
(-1.4,-4.5)/5/{\nameref{A4:5} \\ $\Z_2^2$ gSSB},
(0.3,-4.5)/6/{\nameref{A4:6} \\ $\Z_3$\! gSSB},
(2,-4.5)/7/{\nameref{A4:7} \\ $\Z_3$\! gSSB},
(4.,-4.5)/8/{\nameref{A4:8} \\ $\!A_4/\Z_2\!$ gSSB},
(-4.3,-7)/9/{\nameref{A4:9} \\ SPT},
(-3.2,-7)/10/{\nameref{A4:10} \\ SPT${}^\gamma$},
(-1.9,-7)/11/{\nameref{A4:11} \\ $\Z_3$\! SSB},
(-0.5,-7)/12/{\nameref{A4:12} \\ $\Z_3$\! SSB},
(0.95,-7)/13/{\nameref{A4:13} \\ $\Z_2^2$ SSB},
(2.66,-7)/14/{\nameref{A4:14} \\ $\!A_4/\Z_2\!$ SSB},
(4.35,-7)/15/{\nameref{A4:15} \\ $\!A_4$ SSB}}
    {
      \node[vertex,align=center] (p\i) at \coord {\j};
     }
  \foreach [count=\r] \row in
{{0,1,1,1,1,0,0,0,0,0,0,0,0,0,0},{0,0,0,0,0,1,1,1,0,0,0,0,0,0,0},
{0,0,0,0,0,0,0,0,1,0,1,0,0,0,0},{0,0,0,0,0,0,0,0,0,1,0,1,0,0,0},
{0,0,0,0,0,0,0,0,0,0,0,0,1,0,1},{0,0,0,0,0,0,0,0,0,0,1,0,0,1,0},
{0,0,0,0,0,0,0,0,0,0,0,1,0,1,0},{0,0,0,0,0,0,0,0,0,0,0,0,0,1,1},
{0,0,0,0,0,0,0,0,0,0,0,0,0,0,0},{0,0,0,0,0,0,0,0,0,0,0,0,0,0,0},
{0,0,0,0,0,0,0,0,0,0,0,0,0,0,0},{0,0,0,0,0,0,0,0,0,0,0,0,0,0,0},
{0,0,0,0,0,0,0,0,0,0,0,0,0,0,0},{0,0,0,0,0,0,0,0,0,0,0,0,0,0,0},
{0,0,0,0,0,0,0,0,0,0,0,0,0,0,0}}
    {
     \foreach [count=\c] \cell in \row{
            \ifnum\cell=1%
                \draw[-stealth] (p\r) edge [thick] (p\c);
            \fi
        }
    }
\end{scope}
\end{tikzpicture}
\caption{Hasse diagram for $D(A_4)$, whose algebras are detailed in Table~\ref{tab:A4alg}. SPT$^\gamma$ denotes the non-trivial $A_4$-SPT.}
\label{fig:HasseA4}
\end{figure}

\begin{figure}
\centering
\begin{tikzpicture}[vertex/.style={draw,font=\small}, xscale=1.2, yscale=0.9]
 \begin{scope}[shift={(0,0)}]
    \foreach \coord/\i/\j in {
(0.,-1)/1/{\nameref{A4om2:1} \\igSPT},
(-2,-2.5)/2/{\nameref{A4om2:2} \\ $\Z_3$ igSSB},
(2,-4)/3/{\nameref{A4om2:3} \\ $\Z_2^2$ gSSB},
(-2,-4)/4/{\nameref{A4om2:4} \\ $\!A_4/\Z_2\!$ igSSB},
(2,-5.5)/5/{\nameref{A4om2:5} \\ $\Z_2^2$ SSB},
(-2,-5.5)/6/{\nameref{A4om2:6} \\ $A_4$ SSB}}
     {
      \node[vertex,align=center] (p\i) at \coord {\j};
     }
 \foreach [count=\r] \row in
{{0,1,1,0,0,0},
{0,0,0,1,0,0},
{0,0,0,0,1,1},
{0,0,0,0,0,1},
{0,0,0,0,0,0},
{0,0,0,0,0,0}}
    {
     \foreach [count=\c] \cell in \row{
            \ifnum\cell=1%
                \draw[-stealth] (p\r) edge [thick] (p\c);
            \fi
        }
    }
\end{scope}
\end{tikzpicture}
\caption{Hasse diagram for $D^{\omega_2}(A_4)$. The algebras are detailed in Table~\ref{tab:A4alg_omega2}.}
\label{fig:HasseA4om2}
\end{figure}

\begin{figure}
\centering
\begin{tikzpicture}[vertex/.style={draw,font=\small}, xscale=1.2, yscale=0.9]
 \begin{scope}[shift={(0,0)}]
    \foreach \coord/\i/\j in {
(0.,-2)/1/{\nameref{A4om3:1} \\ igSPT},
(1.98,-3)/2/{\nameref{A4om3:2} \\ $\Z_3$\! gSSB},
(-4.2,-4.5)/3/{\nameref{A4om3:3} \\ igSPT},
(-2.8,-4.5)/4/{\nameref{A4om3:4} \\ igSPT},
(2.125,-4.5)/5/{\nameref{A4om3:5} \\ $\Z_2^2$ igSSB},
(-1.22,-4.5)/6/{\nameref{A4om3:6} \\ $\Z_3$\! gSSB},
(0.5,-4.5)/7/{\nameref{A4om3:7} \\ $\Z_3$\! gSSB},
(4,-4.5)/8/{\nameref{A4om3:8} \\ $\!A_4/\Z_2\!$ gSSB},
(-4.2,-7)/9/{\nameref{A4om3:9} \\ $\Z_3$\! SSB},
(-1.5,-7)/10/{\nameref{A4om3:10} \\ $\Z_3$\! SSB},
(1.5,-7)/11/{\nameref{A4om3:11} \\ $\!A_4/\Z_2\!$ SSB},
(4,-7)/12/{\nameref{A4om3:12} \\ $A_4$ SSB}}
    {
      \node[vertex,align=center] (p\i) at \coord {\j};
     }
  \foreach [count=\r] \row in
  {{0,1,1,1,1,0,0,0,0,0,0,0},
{0,0,0,0,0,1,1,1,0,0,0,0},
{0,0,0,0,0,0,0,0,1,0,0,0},
{0,0,0,0,0,0,0,0,0,1,0,0},
{0,0,0,0,0,0,0,0,0,0,0,1},
{0,0,0,0,0,0,0,0,1,0,1,0},
{0,0,0,0,0,0,0,0,0,1,1,0},
{0,0,0,0,0,0,0,0,0,0,1,1},
{0,0,0,0,0,0,0,0,0,0,0,0},
{0,0,0,0,0,0,0,0,0,0,0,0},
{0,0,0,0,0,0,0,0,0,0,0,0},
{0,0,0,0,0,0,0,0,0,0,0,0}}
    {
     \foreach [count=\c] \cell in \row{
            \ifnum\cell=1%
                \draw[-stealth] (p\r) edge [thick] (p\c);
            \fi
        }
    }
\end{scope}
\end{tikzpicture}
\caption{Hasse diagram for $D^{\omega_3}(A_4)$, whose algebras are detailed in Table~\ref{tab:A4alg_omega3}.}
\label{fig:HasseA4om3}
\end{figure}

\begin{table}
\centering
\scriptsize
\renewcommand{\arraystretch}{1.24}
\microtypesetup{protrusion=true,expansion=true}
\begin{minipage}[t]{0.48\textwidth}
\vspace{0pt}
\centering
\captionsetup{width=\linewidth}
\begin{tabular}{|c|}
\hline
Condensable algebras of $D(A_4)$ \\
\hline\hline
$\phantom{D(A_4)}$\quad\hfill\nameref{A4:1}$=A(A_4, 1, 1, 1)$\hfill\quad$D(A_4)$\\
$\xlabel[$A^{A_4}_{1}$]{A4:1}([1],1)$\\
\hline
$\phantom{D(\Z_2\times\Z_2)}$\quad\hfill\nameref{A4:2}$=A(\Z_2\times\Z_2, 1, 1, 1)$\hfill\quad$D(\Z_2\times\Z_2)$\\
$\xlabel[$A^{A_4}_{2}$]{A4:2}([1],1) \oplus ([1],R_1) \oplus ([1],R_2)$\\
\hline
$\phantom{D(\Z_3)}$\quad\hfill\nameref{A4:3}$=A(A_4, \Z_2\times\Z_2, 1, 1)$\hfill\quad$D(\Z_3)$\\
$\xlabel[$A^{A_4}_{3}$]{A4:3}([1],1) \oplus ([b],1_{+,+})$\\
\hline
$\phantom{D(\Z_3)}$\quad\hfill\nameref{A4:4}$=A(A_4, \Z_2\times\Z_2, \gamma, \eps)$\hfill\quad$D(\Z_3)$\\
$\xlabel[$A^{A_4}_{4}$]{A4:4}([1],1) \oplus ([b],1_{+,-})$\\
\hline
$\phantom{D(\Z_3)}$\quad\hfill\nameref{A4:5}$=A(\Z_3, 1, 1, 1)$\hfill\quad$D(\Z_3)$\\
$\xlabel[$A^{A_4}_{5}$]{A4:5}([1],1) \oplus ([1],R_3)$\\
\hline
$\phantom{D(\Z_2)}$\quad\hfill\nameref{A4:6}$=A(\Z_2\times\Z_2, \Z_2^{\langle aba^2\rangle}, 1, 1)$\hfill\quad$D(\Z_2)$\\
$\xlabel[$A^{A_4}_{6}$]{A4:6}([1],1) \oplus ([1],R_1) \oplus ([1],R_2) \oplus ([b],1_{+,+})$\\
\hline
$\phantom{D(\Z_2)}$\quad\hfill\nameref{A4:7}$=A(\Z_2\times\Z_2, \Z_2^{\langle aba^2\rangle}, 1, \eps)$\hfill\quad$D(\Z_2)$\\
$\xlabel[$A^{A_4}_{7}$]{A4:7}([1],1) \oplus ([1],R_1) \oplus ([1],R_2) \oplus ([b],1_{+,-})$\\
\hline
$\phantom{D(\Z_2)}$\quad\hfill\nameref{A4:8}$=A(\Z_2, 1, 1, 1)$\hfill\quad$D(\Z_2)$\\
$\xlabel[$A^{A_4}_{8}$]{A4:8}([1],1) \oplus ([1],R_1) \oplus ([1],R_2) \oplus ([1],R_3)$\\
\hline\hline
\nameref{A4:9}$=A(A_4, A_4, 1, 1)$\\
$\xlabel[$A^{A_4}_{9}$]{A4:9}([1],1) \oplus ([b],1_{+,+}) \oplus ([a],1) \oplus ([a^2],1)$\\
\hline
\nameref{A4:10}$=A(A_4, A_4, \gamma, \eps)$\\
$\xlabel[$A^{A_4}_{10}$]{A4:10}([1],1) \oplus ([b],1_{+,-}) \oplus ([a],1) \oplus ([a^2],1)$\\
\hline
\nameref{A4:11}$=A(\Z_2\times\Z_2, \Z_2\times\Z_2, 1, 1)$\\
$\xlabel[$A^{A_4}_{11}$]{A4:11}([1],1) \oplus ([1],R_1) \oplus ([1],R_2) \oplus 3([b],1_{+,+})$\\
\hline
\nameref{A4:12}$=A(\Z_2\times\Z_2, \Z_2\times\Z_2, \gamma, \eps)$\\
$\xlabel[$A^{A_4}_{12}$]{A4:12}([1],1) \oplus ([1],R_1) \oplus ([1],R_2) \oplus 3([b],1_{+,-})$\\
\hline
\nameref{A4:13}$=A(\Z_3, \Z_3, 1, 1)$\\
$\xlabel[$A^{A_4}_{13}$]{A4:13}([1],1) \oplus ([1],R_3) \oplus ([a],1) \oplus ([a^2],1)$\\
\hline
\nameref{A4:14}$=A(\Z_2, \Z_2, 1, 1)$\\
$\xlabel[$A^{A_4}_{14}$]{A4:14}([1],1) \oplus ([1],R_1) \oplus ([1],R_2) \oplus ([1],R_3)$\\
$\oplus ([b],1_{+,+}) \oplus ([b],1_{+,-})$\\
\hline
\nameref{A4:15}$=A(1, 1, 1, 1)$\\
$\xlabel[$A^{A_4}_{15}$]{A4:15}([1],1) \oplus ([1],R_1) \oplus ([1],R_2) \oplus 3([1],R_3)$\\
\hline
\end{tabular}
\caption{The condensable algebras of $D(A_4)$, in the order of the output of \texttt{AlgebraClasses}. Each entry gives the data $A(H,F,\gamma,\eps)$ with the reduced topological order on its right, and below it the decomposition into the anyons of Table~\ref{tab:A4anyons}. The subgroups are $\Z_2\times\Z_2=\langle b,aba^2\rangle$, $\Z_3=\langle ab\rangle$ and $\Z_2=\langle b\rangle$ unless stated otherwise, and $\gamma$ is the non-trivial class of $H^2(F,U(1))=\Z_2$. The Lagrangian algebras follow the double line, and their reduced topological order is trivial.}
\label{tab:A4alg}
\end{minipage}
\hfill
\begin{minipage}[t]{0.48\textwidth}
\vspace{0pt}
\centering
\captionsetup{width=\linewidth}
\begin{tabular}{|c|}
\hline
Condensable algebras of $D^{\omega_2}(A_4)$ \\
\hline\hline
$\phantom{D^{\omega_2}(A_4)}$\quad\hfill\nameref{A4om2:1}$=A(A_4, 1, 1, 1)$\hfill\quad$D^{\omega_2}(A_4)$\\
$\xlabel[$A^{A_4,\omega_2}_{1}$]{A4om2:1}([1],1)$\\
\hline
$\phantom{D^{\alpha}(\Z_2\times\Z_2)}$\quad\hfill\nameref{A4om2:2}$=A(\Z_2\times\Z_2, 1, 1, 1)$\hfill\quad$D^{\alpha}(\Z_2\times\Z_2)$\\
$\xlabel[$A^{A_4,\omega_2}_{2}$]{A4om2:2}([1],1) \oplus ([1],R_1) \oplus ([1],R_2)$\\
\hline
$\phantom{D(\Z_3)}$\quad\hfill\nameref{A4om2:3}$=A(\Z_3, 1, 1, 1)$\hfill\quad$D(\Z_3)$\\
$\xlabel[$A^{A_4,\omega_2}_{3}$]{A4om2:3}([1],1) \oplus ([1],R_3)$\\
\hline
$\phantom{D^{\alpha}(\Z_2)}$\quad\hfill\nameref{A4om2:4}$=A(\Z_2, 1, 1, 1)$\hfill\quad$D^{\alpha}(\Z_2)$\\
$\xlabel[$A^{A_4,\omega_2}_{4}$]{A4om2:4}([1],1) \oplus ([1],R_1) \oplus ([1],R_2) \oplus ([1],R_3)$\\
\hline\hline
\nameref{A4om2:5}$=A(\Z_3, \Z_3, 1, 1)$\\
$\xlabel[$A^{A_4,\omega_2}_{5}$]{A4om2:5}([1],1) \oplus ([1],R_3) \oplus ([a],1) \oplus ([a^2],1)$\\
\hline
\nameref{A4om2:6}$=A(1, 1, 1, 1)$\\
$\xlabel[$A^{A_4,\omega_2}_{6}$]{A4om2:6}([1],1) \oplus ([1],R_1) \oplus ([1],R_2) \oplus 3([1],R_3)$\\
\hline
\end{tabular}
\caption{The condensable algebras of $D^{\omega_2}(A_4)$, with the conventions of Table~\ref{tab:A4alg}. The reduced topological orders of the second and fourth algebra are twisted, with $\alpha$ the 3-cocycle of the reduced topological order computed by \texttt{alphaf}.}
\label{tab:A4alg_omega2}
\vspace{3ex}
\begin{tabular}{|c|}
\hline
Condensable algebras of $D^{\omega_3}(A_4)$ \\
\hline\hline
$\phantom{D^{\omega_3}(A_4)}$\quad\hfill\nameref{A4om3:1}$=A(A_4, 1, 1, 1)$\hfill\quad$D^{\omega_3}(A_4)$\\
$\xlabel[$A^{A_4,\omega_3}_{1}$]{A4om3:1}([1],1)$\\
\hline
$\phantom{D(\Z_2\times\Z_2)}$\quad\hfill\nameref{A4om3:2}$=A(\Z_2\times\Z_2, 1, 1, 1)$\hfill\quad$D(\Z_2\times\Z_2)$\\
$\xlabel[$A^{A_4,\omega_3}_{2}$]{A4om3:2}([1],1) \oplus ([1],R_1) \oplus ([1],R_2)$\\
\hline
$\phantom{D^{\alpha}(\Z_3)}$\quad\hfill\nameref{A4om3:3}$=A(A_4, \Z_2\times\Z_2, 1, 1)$\hfill\quad$D^{\alpha}(\Z_3)$\\
$\xlabel[$A^{A_4,\omega_3}_{3}$]{A4om3:3}([1],1) \oplus ([b],1_{+,+})$\\
\hline
$\phantom{D^{\alpha}(\Z_3)}$\quad\hfill\nameref{A4om3:4}$=A(A_4, \Z_2\times\Z_2, \gamma, \eps)$\hfill\quad$D^{\alpha}(\Z_3)$\\
$\xlabel[$A^{A_4,\omega_3}_{4}$]{A4om3:4}([1],1) \oplus ([b],1_{+,-})$\\
\hline
$\phantom{D^{\alpha}(\Z_3)}$\quad\hfill\nameref{A4om3:5}$=A(\Z_3, 1, 1, 1)$\hfill\quad$D^{\alpha}(\Z_3)$\\
$\xlabel[$A^{A_4,\omega_3}_{5}$]{A4om3:5}([1],1) \oplus ([1],R_3)$\\
\hline
$\phantom{D(\Z_2)}$\quad\hfill\nameref{A4om3:6}$=A(\Z_2\times\Z_2, \Z_2^{\langle aba^2\rangle}, 1, 1)$\hfill\quad$D(\Z_2)$\\
$\xlabel[$A^{A_4,\omega_3}_{6}$]{A4om3:6}([1],1) \oplus ([1],R_1) \oplus ([1],R_2) \oplus ([b],1_{+,+})$\\
\hline
$\phantom{D(\Z_2)}$\quad\hfill\nameref{A4om3:7}$=A(\Z_2\times\Z_2, \Z_2^{\langle aba^2\rangle}, 1, \eps)$\hfill\quad$D(\Z_2)$\\
$\xlabel[$A^{A_4,\omega_3}_{7}$]{A4om3:7}([1],1) \oplus ([1],R_1) \oplus ([1],R_2) \oplus ([b],1_{+,-})$\\
\hline
$\phantom{D(\Z_2)}$\quad\hfill\nameref{A4om3:8}$=A(\Z_2, 1, 1, 1)$\hfill\quad$D(\Z_2)$\\
$\xlabel[$A^{A_4,\omega_3}_{8}$]{A4om3:8}([1],1) \oplus ([1],R_1) \oplus ([1],R_2) \oplus ([1],R_3)$\\
\hline\hline
\nameref{A4om3:9}$=A(\Z_2\times\Z_2, \Z_2\times\Z_2, 1, 1)$\\
$\xlabel[$A^{A_4,\omega_3}_{9}$]{A4om3:9}([1],1) \oplus ([1],R_1) \oplus ([1],R_2) \oplus 3([b],1_{+,+})$\\
\hline
\nameref{A4om3:10}$=A(\Z_2\times\Z_2, \Z_2\times\Z_2, \gamma, \eps)$\\
$\xlabel[$A^{A_4,\omega_3}_{10}$]{A4om3:10}([1],1) \oplus ([1],R_1) \oplus ([1],R_2) \oplus 3([b],1_{+,-})$\\
\hline
\nameref{A4om3:11}$=A(\Z_2, \Z_2, 1, 1)$\\
$\xlabel[$A^{A_4,\omega_3}_{11}$]{A4om3:11}([1],1) \oplus ([1],R_1) \oplus ([1],R_2) \oplus ([1],R_3)$\\
$\oplus ([b],1_{+,+}) \oplus ([b],1_{+,-})$\\
\hline
\nameref{A4om3:12}$=A(1, 1, 1, 1)$\\
$\xlabel[$A^{A_4,\omega_3}_{12}$]{A4om3:12}([1],1) \oplus ([1],R_1) \oplus ([1],R_2) \oplus 3([1],R_3)$\\
\hline
\end{tabular}
\caption{The condensable algebras of $D^{\omega_3}(A_4)$, with the conventions of Table~\ref{tab:A4alg}. The three algebras with reduced topological order $D^{\alpha}(\Z_3)$, with $\alpha$ non-trivial, are intrinsically gapless in Figure~\ref{fig:HasseA4om3}.}
\label{tab:A4alg_omega3}
\end{minipage}
\end{table}

The anyons of the three theories are listed in Table~\ref{tab:A4anyons}, in the labels of the output, and the algebras in Tables~\ref{tab:A4alg}, \ref{tab:A4alg_omega2} and \ref{tab:A4alg_omega3}, whose entries are the nodes of the Hasse diagrams. The symmetry boundary is the Dirichlet algebra $\cL(1,1)$, which realizes the $A_4$ symmetry with the anomaly $\omega$, and the phase types in the diagrams are the output of \texttt{PhaseLabels}, with the preserved subgroup of an SSB phase given by $H$. 

Without anomaly, $A_4$ has fifteen phases, which agree with~\cite{Aksoy:2025rmg}, of which seven are gapped: the trivial phase $\cL(A_4,1)$ and the $A_4$ SPT phase $\cL(A_4,\gamma)$ with $\gamma$ the generator of $H^2(A_4,U(1))=\Z_2$, studied numerically in~\cite{Lootens:2024gfp}, which differ only in whether the flux $([b],1_{+,+})$ or $([b],1_{+,-})$ is condensed, and five SSB phases. We denote by $A_4/\Z_2$ SSB the phase preserving $H=\Z_2^{\langle b\rangle}=\langle b\rangle$.

We remark that the algebra $A^{A_4}_5$ is not a subalgebra of $A^{A_4}_{14}$, although its anyons are a subset of those of $A^{A_4}_{14}$, and the same happens for $A^{A_4,\omega_3}_5$ and $A^{A_4,\omega_3}_{11}$. Indeed, the inclusion test of \texttt{IsSubalgebra} is the partial order of \cite{Gai:2026hjk} summarized in Sec.~\ref{sec:algebras} and not merely the subobject condition (which would be incorrect), a phenomenon also found for the Haagerup symmetry~\cite{Bottini:2025hri}.

For the $A_4$ symmetry with the $3$-cocycle anomaly $\omega_2$, which is non-trivial on $\Z_2\times\Z_2$ and on $\Z_2=\langle b\rangle$, only $\Z_3$ and the trivial group admit a gapped phase. The gapless phases whose reduced topological order has a non-trivial twist, $A^{A_4,\omega_2}_1$, $A^{A_4,\omega_2}_2$ and $A^{A_4,\omega_2}_4$, are intrinsically gapless. For the $A_4$ symmetry with the $3$-cocycle anomaly $\omega_3$, which is non-trivial on $\Z_3$ only, the gapped phases are those with $H\leq\Z_2\times\Z_2$, while the three gapless phases with reduced topological order $D^{\alpha}(\Z_3)$, as well as $A^{A_4,\omega_3}_1$, are intrinsically gapless.

\subsection{Twin Algebras of \texorpdfstring{$G_{32}$}{G32}}
\label{sec:ex7}

The computational core of~\cite{Gai:2026hjk} is the code
\begin{gapcode}
Read("examples/7_twin_algebras_G32.g");
\end{gapcode}
No group of order $<32$ has twin algebras (inequivalent algebras with the same anyon decomposition) in $D(G)$, as the full sweep of the script verifies (see below), and the first twin algebras appear for certain groups of order 32. We will consider a group which was studied in \cite{Gai:2026hjk} using this code and denote it by 
\be
G_{32}=\Z_8\rtimes(\Z_2\times\Z_2)=\texttt{SmallGroup(32,43)}
\ee
We define the generators in the notation of~\cite{Gai:2026hjk}: 
\be\ba
G_{32}=\langle g_1,g_2,g_3\mid g_1^8=g_2^2=g_3^2=\id\,,\ g_2g_3=g_3g_2\,,\ g_2g_1g_2^{-1}=g_1^3\,,\ g_3g_1g_3^{-1}=g_1^5\rangle \,.
\ea\ee
We label the group generators as follows:
\begin{gapcode}
G := MakeGroup([32,43]);;
gens := First(Cartesian(Elements(G), Elements(G), Elements(G)), t ->
    Order(t[1]) = 8 and Order(t[2]) = 2 and Order(t[3]) = 2 and t[2]*t[3] = t[3]*t[2]
    and t[2]*t[1]/t[2] = t[1]^3 and t[3]*t[1]/t[3] = t[1]^5 and Group(t) = G);;
RelabelGroup(G, ["g2","g3","g1"], [gens[2], gens[3], gens[1]]);;
\end{gapcode}
\texttt{AlgebraClasses} returns the algebra classes, the twin pairs of non-isomorphic algebras with the same anyons, and for each pair the reason why the two algebras are not isomorphic, as a \LaTeX{} string:
\begin{gapcode}
th := TQDTheory(G);;
M := AlgebraClasses(th);;
Length(M.classes);
for k in [1..Length(M.twins)] do
    Print(M.reasons[k][2], "\n");
    for i in M.twins[k] do Print(" ", M.classes[i].name, "\n"); od;
    Print(" = ", M.classes[M.twins[k][1]].decomposition, "\n");
od;
\end{gapcode}
which prints
\begin{gapout}
517
\eps^{\prime g} \neq \eps
  A(C8 : (C2 x C2)=<g2,g3,g1>, C8=<g1>, 1, 1)
  A(C8 : (C2 x C2)=<g2,g3,g1>, C8=<g1>, 1, eps)
  = ([1],1)+([g1^2],1)+([g1^4],1)+([g1],1)
\eps^{\prime g} \neq \eps
  A(C8 : (C2 x C2)=<g2,g3,g1>, C8=<g3g1>, 1, 1)
  A(C8 : (C2 x C2)=<g2,g3,g1>, C8=<g3g1>, 1, eps)
  = ([1],1)+([g1^2],1)+([g1^4],1)+([g3g1],1)
(H',F') \neq {}^g (H,F)
  A(C2 x C2 x C2=<g2,g3,g1^4>, C2=<g3>, 1, eps)
  A(C2 x C2 x C2=<g2,g3,g1^4>, C2=<g3g1^4>, 1, eps)
  = ([1],1)+([1],1 g1=-1)+([1],2)+([g3],2)
H' \neq {}^g H
  A(C2 x C2=<g2,g3>, 1, 1, 1)
  A(C2 x C2=<g2g3,g2g1^4>, 1, 1, 1)
  = ([1],1)+([1],1 g1=-1)+([1],2)+([1],4)
\end{gapout}
The $556$ algebra data fall into $517$ classes with four twin pairs, all non-Lagrangian: two $\eps$-type pairs, which are the gapless SPT twins distinguished by generalized string order parameters, one $F$-type pair (called $N$-type in~\cite{Gai:2026hjk}), and one $H$-type pair from the Gassmann triple $(G_{32},H_1,H_2)$ of two non-conjugate $\Z_2^2$ groups~\cite{Gai:2026hjk}. The normal subgroups $\langle g_3\rangle$ and $\langle g_3g_1^4\rangle={}^{g_1}\langle g_3\rangle$ of the $F$-type pair are conjugate in $G$, but not by the normalizer of $H_3=\langle g_2,g_3,g_1^4\rangle$:
\begin{gapcode}
g1 := GroupElement(G, "g1");; g2 := GroupElement(G, "g2");; g3 := GroupElement(G, "g3");;
H3 := Group(g2, g3, g1^4);;  F1 := Group(g3);;  F2 := Group(g3*g1^4);;
[ IsConjugate(G, F1, F2), IsConjugate(Normalizer(G, H3), F1, F2) ];
\end{gapcode}
returns \texttt{[ true, false ]}. For the $H$-type pair, the only one with $F=1$, the inclusions \texttt{[i,j]} of class $i$ into class $j$ from the \texttt{Hasse} function give the largest common subalgebra of the twins:
\begin{gapcode}
Hs := Hasse(th, M);;
pair := First(M.twins, p -> ForAll(p, i -> IsTrivial(M.classes[i].algebras[1][2])));;
common := Filtered([1..Length(M.classes)],
    i -> IsSubset(Hs.inclusions, [[i, pair[1]], [i, pair[2]]]));;
Maximum(List(common, i -> [M.classes[i].dim, M.classes[i].name]));
\end{gapcode}
which returns
\begin{gapout}
[ 4, "A(C2 x C2 x C2=<g2,g3,g1^4>, 1, 1, 1)" ]
\end{gapout}
which is the interface through which those two $H$-type twins are reached. 

The script then asks the user two yes/no questions. The first is whether to compute the algebras with the $3$-cocycle $\omega$ of \texttt{examples/\allowbreak G32-43-omega.g}, which restricts to the type-III cocycle on $H_3=\Z_2^3$. The resulting theory has $5$ twin pairs, all of $H$-type, including the twin Lagrangian algebras $\cL(H_1,1)$ and $\cL(H_2,1)$ that define the twin gapped phases of~\cite{Gai:2026hjk}. The $F$-type pair does not survive the twist, since $\omega(\cdot,\cdot|g_3)$ is not a coboundary on $H_3$. The second question is whether to run the sweep that reproduces the table of twins in $D(G)$ of~\cite{Gai:2026hjk}, by default over the groups listed there and optionally over all groups of order up to $48$ with \texttt{fullSweep := true}.

\subsection{Twin Phases of GL(2,3)}
\label{sec:ex8}

The next script, to be started from a clean GAP session,
\begin{gapcode}
Read("examples/8_twin_algebras_GL23.g");
\end{gapcode}
was used for the twin phases of~\cite{Warman:2026gfz}. The group $G=GL(2,3)\cong\allowbreak Q_8\rtimes S_3$ of order $48$ is constructed as a subgroup of the single-qubit Clifford group. We build it from the Hadamard gate \texttt{Had} and the phase gate \texttt{Phase}, with the generators $h$, $c$, $a$, $xz$, $iz$ of~\cite{Warman:2026gfz}:
\begin{gapcode}
Had := [[1,1],[1,-1]]/Sqrt(2);;  Phase := DiagonalMat([1,E(4)]);;
x := [[0,1],[1,0]];;  z := DiagonalMat([1,-1]);;
iso := IsomorphismPcGroup(Group([Had, (Had*Phase)^8]));;
G := MakeGroup(Image(iso));;
h := Image(iso, Had);;  c := Image(iso, -IdentityMat(2));;
a := Image(iso, E(8)^3*Had*Phase*x);;
RelabelGroup(G, ["h","c","a","xz","iz"], [h, c, a, Image(iso, x*z), Image(iso, E(4)*z)]);;
\end{gapcode}
The subgroups $S_3^{(1)}=\langle a,h\rangle$ and $S_3^{(2)}=\langle a,hc\rangle$ are not conjugate but form a Gassmann triple, and $D(G)$ has $86$ algebra classes with $5$ twin pairs:
\begin{gapcode}
S31 := Group(a, h);; S32 := Group(a, h*c);; D12 := Group(a, h, c);; Z3 := Group(a);;
th := TQDTheory(G);;
M := AlgebraClasses(th);;
[ Length(M.classes), Length(M.twins) ];
\end{gapcode}
which returns \texttt{[86,5]}. 

\texttt{ClassOn(H,F)} returns the position of the class of the algebras on the subgroups $H$ and $F$. The twin Lagrangian algebras $\cL(S_3^{(1)},1)$ and $\cL(S_3^{(2)},1)$ have the same anyons, and, as in Sec.~\ref{sec:ex7}, the inclusions of \texttt{Hasse} show that $A(D_{12},\Z_3,1,1)$ is a largest common subalgebra of the two twins:
\begin{gapcode}
ClassOn := {H, F} -> PositionProperty(M.classes,
    cl -> ForAny(cl.algebras, A -> ForAny(G, g -> A[1]^g = H and A[2]^g = F)));;
k := [ ClassOn(S31, S31), ClassOn(S32, S32), ClassOn(D12, Z3) ];;
for i in k do Print(M.classes[i].name, "\n"); od;
M.classes[k[1]].decomposition = M.classes[k[2]].decomposition;
Hs := Hasse(th, M);;
common := Filtered([1..Length(M.classes)],
    i -> IsSubset(Hs.inclusions, [[i, k[1]], [i, k[2]]]));;
[ Maximum(List(common, i -> M.classes[i].dim)), k[3] in common ];
\end{gapcode}
which prints
\begin{gapout}
A(S3=<h,a>, S3=<h,a>, 1, 1)
A(S3=<a,hc>, S3=<a,hc>, 1, 1)
A(D12=<hc,ha>, C3=<a>, 1, 1)
true
[ 12, true ]
\end{gapout}
where the last line shows that the largest common subalgebras have dimension $12$ and that $A(D_{12},\Z_3,1,1)$ is one of them.
In the labels of~\cite{Warman:2026gfz}, used by the script, the twins decompose as 
\be
1+\rho_3+\rho_4+[h]_{++}+[h]_{+-}+[a]_{++}+[a]_{+-}
\ee
and $A(D_{12},\Z_3,1,1)$ as $1+\rho_3+[a]_{++}$. The twin Lagrangian algebras give rise to the twin gapped phases, whose order parameters are different components of $\rho_4$.
$A(D_{12},\Z_3,1,1)$ specifies an interface that can be used for the transition, from the ``club sandwich'' SymTFT construction~\cite{Bhardwaj:2023bbf,Bhardwaj:2024qrf,Chatterjee:2022tyg,Warman:2026gfz}.

The code then asks if one would like to compute the twin algebras with the anomaly $\omega$ of \texttt{examples/\allowbreak GL23-om-data.g} which is trivial on both $S_3^{(k)}$ and pulls back the type-II cocycle on $D_{12}/\Z_3=\Z_2\times\Z_2$. This implies that no Lagrangian algebra has $H=D_{12}$ so the transition between twins cannot be written as a composition of two Landau transitions. The code computes the map of anyons across $A(D_{12},\Z_3,1,1)$ to its reduced topological order $D^{\omega_{\rm II}}(\Z_2\times\Z_2)$,
\begin{gapout}
([1],1)            ->  1+rho_3+[a]_{++}
([1],1 h=-1)       ->  rho_{1-}+rho_{3-}+[a]_{++}
([1],1 c=-1 h=-1)  ->  rho_4+[a]_{+-}
([hc],1 c=E(4)^3)  ->  [h]_{++}
([h],1)            ->  [h]_{++}          ...
All 16 anyon-map rows match (InterfaceMap), including [c]_{-i} times the G-irreps
and the [ac]_{+-i}, [ac]_{++i} terms.
\end{gapout}
Finally, \texttt{PhaseLabels} confirms that the SSB and trivial $\Z_4$ phases of the reduced topological order map to the two twin full SSB phases of the non-invertible symmetry $\cC(G,S_3^{(1)})$, with symmetry boundary $\cL(S_3^{(1)},1)$. The lattice model of this intrinsically beyond-Landau transition was studied numerically in~\cite{Warman:2026gfz}.

\subsection{Quantum Computing Gates}
\label{sec:ex9}

The final example script 
\begin{gapcode}
Read("examples/9_quantum_computing_D4N.g");
\end{gapcode}
computes the non-Clifford logical gates constructed in~\cite{Huang:2025ump,Warman:2025hov} using the quantum doubles of the dihedral groups. In this section only, the subscript follows these references and the script, so $D_{4N}$ is the group of symmetries of the regular $4N$-gon, of order $8N$, with $N=1,2,4$ (here, $D_4$ is the group of order $8$ that was called $D_8$ in Sec.~\ref{sec:ex2}). These can be used to topologically implement the gates
\be \label{eq:T1N}
T^{1/N}=\diag(1,e^{i\pi/(4N)})\,.
\ee
\paragraf{Hybrid Lattice Surgery.} \texttt{D4NSurgeryGate} runs the hybrid lattice surgery protocol of~\cite{Huang:2025ump}: a $\Z_{4N}$ patch prepared in the $\Z_{4N}$ stabilizer state 
$\ket{S}=\frac{1}{\sqrt{4N}}\sum_je^{i\pi j^2/(4N)}\ket{m^j}$, is teleported into a $D(D_{4N})$ patch by gauging the diagonal subgroup $\langle(m,r)\rangle\subset \Z_{4N}\times D_{4N}$. The topological orders are then split. $D_{4N}$ is then merged into a $\Z_2$ patch by gauging $\langle(r^{2N},m),(rs,1)\rangle$ and then $(s,1)$ to change boundary conditions: this uses the source code function \texttt{RoughMergeSplit}.

\paragraf{SPT-stacking.} A constant-depth construction of the gates \eqref{eq:T1N} was provided in \cite{Warman:2025hov} from an automorphism of $D(D_{4N})$ given by the non-trivial 2-cocycle class $\alpha\in H^2(D_{4N},U(1))$ with counterterms $\beta^{(i)}$ on the three boundaries of a triangular patch, resulting in the logical gate
\be
U_{\alpha,\beta}=\diag\Bigl(1,\frac{\alpha(g_1,g_2)\beta^{(3)}(g_1g_2)}{\beta^{(1)}(g_1)\beta^{(2)}(g_2)}\Bigr)\,,
\ee
which is computed in AlgebrasTQD by \texttt{SPTStackingGate}.
\texttt{D4NCocycle} returns $\alpha$ and the counterterm $\beta^{(3)}$ on $\langle r\rangle$, and we take $\beta^{(1)}=\beta^{(2)}=1$ on $\langle rs\rangle$ and $\langle s\rangle$. 
\texttt{D4NPhaseGate} realizes $\beta^{(3)}$ as an explicit circuit of phase and multiply-controlled $S$ gates on $\log_2(4N)$ qubits, and returns its diagonal and its level in the Clifford hierarchy. 

\newpage
The code
\begin{gapcode}
trivial := function(g) return 1; end;;

for N in [1,2,4] do
    Print("\n\nD",4*N," quantum computing gates\n\n");

    surgery := D4NSurgeryGate(N);;
    Print("Hybrid lattice surgery logical gate: diag",surgery.eigenphases,"\n\n");

    coc := D4NCocycle(N);;
    data := coc.data;;
    r := data.r;; s := data.s;;
    stacked := SPTStackingGate(rec(alpha:=coc.alpha,
        beta:=[trivial,trivial,coc.beta],
        K:=[Group(r*s),Group(s),Group(r)],g:=[r*s,s]));;
    phase := D4NPhaseGate(coc);;
    if stacked <> phase.U then
        ErrorNoReturn("The counterterm circuit must encode the SPT-stacking gate");
    fi;
    Print("SPT-stacking logical gate (constant depth): diag",DiagonalOfMat(stacked),"\n\n");
    Print("Boundary gate: diag",DiagonalOfMat(phase.Mbeta),"\n\n");
    Print("Clifford hierarchy level: ",phase.level,"\n");
od;
\end{gapcode}
prints 
\begin{gapout}
D4 quantum computing gates
Hybrid lattice surgery logical gate: diag[ 1, E(8) ]
SPT-stacking logical gate (constant depth): diag[ 1, E(8) ]
Boundary gate: diag[ 1, E(8), 1, -E(8)^3 ]
Clifford hierarchy level: 3


D8 quantum computing gates
Hybrid lattice surgery logical gate: diag[ 1, E(16) ]
SPT-stacking logical gate (constant depth): diag[ 1, E(16) ]
Boundary gate: diag[ 1, E(16), E(8), E(16)^3, 1, -E(16)^5, -E(8)^3, -E(16)^7 ]
Clifford hierarchy level: 4


D16 quantum computing gates
Hybrid lattice surgery logical gate: diag[ 1, E(32) ]
SPT-stacking logical gate (constant depth): diag[ 1, E(32) ]
Boundary gate: diag[ 1, E(32), E(16), E(32)^3, E(8), E(32)^5, E(16)^3, E(32)^7, 1, -E(32)^9, -E(16)^5, -E(32)^11, -E(8)^3, -E(32)^13, -E(16)^7, -E(32)^15 ]
Clifford hierarchy level: 5
\end{gapout}
This shows that the two constructions give the same logical gate: $T$ for $N=1$, $T^{1/2}$ for $N=2$ and $T^{1/4}$ for $N=4$, at levels $3$, $4$ and $5$ of the Clifford hierarchy.

\newpage
\section{Conclusion and Outlook}
\label{sec:conclusion}

AlgebrasTQD computes the condensable algebras of twisted quantum doubles $D^\omega(G)$ with their full algebra data and their anyon decomposition. The example scripts reproduce in minutes the results of the papers that motivated the code, and each code block of Sec.~\ref{sec:examples} can serve as a starting point for a new calculation. We close with some directions for future work.

\paragraf{Lattice Models and Tensor Networks.} Lattice Hamiltonians with group-theoretical symmetries, such as those of twin phases~\cite{Warman:2026gfz} and anyon chains~\cite{Aasen:2016dop,Aasen:2020jwb,Bhardwaj:2024kvy,Warman:2024lir,Inamura:2025cum}, can be built from the data that AlgebrasTQD computes. Combining this code with tensor-network methods~\cite{White:1992zz,Verstraete:2008cex,Schollwoeck:2010uqf,Orus:2013kga,Cirac:2020obd,Bultinck:2015bot,Lootens:2021tet,Lootens:2022avn,Lootens:2023wnl,Lootens:2024gfp} would allow simulations of phases and transitions beyond the reach of exact diagonalization. Such simulations could also check the features of twin transitions and verify that they indeed remain beyond Landau in every duality frame and distinguish them via their algebra of order parameters \eqref{eq:structure}, building on~\cite{Senthil:2003eed,Senthil:2023vqd,Zhang:2022wwn,Su:2024vrk,Chen:2025ivo,Chen:2026fai,Warman:2026gfz,Lootens:2021tet,Lootens:2022avn,Gai:2026hjk,Warman:2026gfz,Chavda:2026mpz}.

\paragraf{Condensable Algebras in Higher Dimensions.} For all-bosonic fusion 2-categories, describing symmetries of 2+1d theories related by generalized gauging to a finite group symmetry $G$ and anomaly $\pi\in H^4(G,U(1))$, the SymTFT is the $(3+1)$d Dijkgraaf-Witten theory $\cZ(2\Vec_G^\pi)$ \cite{Barkeshli:2014cna,Kong:2017hcw,Kong:2020cie,Bhardwaj:2024qiv,Bhardwaj:2025piv,Bhardwaj:2025jtf,Wen:2024qsg,Wen:2025thg}. Its ``minimal'' condensable algebras, in the terminology of~\cite{Bhardwaj:2024qiv}, are again labeled by group-theoretical data, analogous to $(H,F,\gamma,\eps)$. However, there are also ``non-minimal'' algebras, which involve $H$-graded (pre-)modular tensor categories and have no analogue in the $(2+1)$d SymTFTs of this paper. Extending AlgebrasTQD to these higher-dimensional SymTFTs, starting from the minimal algebras, is a natural next step.

\paragraf{Landau-Ginzburg Theories.} Field theories of Landau-Ginzburg type have recently been proposed for transitions with non-invertible symmetries in 1+1d~\cite{Chen:2025uno,Ravindran:2026jjn} and, coupled to gauge fields, for anyon proliferation in 2+1d~\cite{Cheng:2026qax,Cheng:2026hgu,Cheng:2026dxn,Schafer-Nameki:2026hgx,Gai:2026vgz,Eck:2026chq}. When the fields are associated with a condensable algebra, the algebra fixes more than the field content. For quantum doubles, its multiplication determines certain couplings, and twin algebras give distinct transitions~\cite{Gai:2026vgz}. The full algebra data computed by AlgebrasTQD could therefore provide the input of such theories for any finite group $G$ and $3$-cocycle $\omega$.

\paragraf{Quantum Error Correction.} A natural next step in the quantum information setting is a general investigation of all the logical gates of topological codes based on twisted quantum doubles, beyond the dihedral examples of Sec.~\ref{sec:ex9}. For twisted Abelian groups, such as $\Z_2^3$ with $\om_\III$, high-rate quantum low density parity check (qLDPC) generalizations of these codes have recently been constructed~\cite{Zhu:2026vec,Zhu:2026iep,Chang:2026xkz}. It would also be interesting to combine these logical gates with the exact recovery protocol of~\cite{Warman:2026flz} and with  decoders for non-Abelian codes~\cite{Sala:2024uom,Vadali:2026qon,Jing:2025zuu,delaFuente:2026slp,Sajith:2026vnw,Sala:2026lhy}.

\section*{Acknowledgments}
AW thanks Yuhan Gai and Sakura Sch\"afer-Nameki for the collaboration on twin algebras and twin phases~\cite{Gai:2026hjk,Warman:2026gfz} that motivated this code, Xiao-Gang Wen for suggesting GAP, and the developers of GAP and HAP~\cite{GAP4,HAP}. Anthropic's Claude Code and OpenAI's Codex were used for some re-formatting changes, code documentation descriptions and to suggest improvements. All AI-assisted output was directed, checked and verified by the author.

\paragraf{Funding information.} This work is supported by the UKRI Frontier Research Grant underwriting the ERC Advanced Grant ``Generalized Symmetries in Quantum Field Theory and Quantum Gravity'' of Sakura Sch\"afer-Nameki.
The author would like to thank the Isaac Newton Institute for Mathematical Sciences, Cambridge, for support and hospitality during the programme European tensor networks school, supported by EPSRC grant EP/Z000580/1, where work on this paper was completed.

\section*{Code Availability}
The code is available at \url{https://github.com/AlisonWarman/AlgebrasTQD}~\cite{AlgebrasTQD} under the Creative Commons Attribution 4.0 license. The repository also contains the modular-data file \texttt{source/0\_Modular\_Data\_} \texttt{Mignard-Schauenburg-Gruen-Morrison.g}, written by Mignard, Schauenburg, Gruen and Morrison~\cite{mignard2017moritaequivalencepointedfusion,gruen2021computing}, with small fixes and compatibility changes. Its original is part of the \TeX{} source of~\cite[arXiv:1808.05060]{gruen2021computing}, and its copyright remains with its authors.

\appendix

\section{Proofs of the Optimizations}
\label{app:proofs}

This appendix proves that the simplifications of Sec.~\ref{sec:solver} are valid, for every finite group and every normalized $3$-cocycle. We use the notation of Sec.~\ref{sec:theory}: ${}^gx=gxg^{-1}$, $\omega$ is a normalized 3-cocycle on $G$, and all cochains are normalized.

\begin{theorem}
\label{thm:gamma}
Let $F$ be a finite group, $\omega$ a normalized $3$-cocycle on $F$, and $S$ a generating set of $F$. For a normalized $2$-cochain $\gamma$ on $F$ the following are equivalent:
\begin{enumerate}
\item[(i)] $\gamma(f_2,f_3)\,\gamma(f_1,f_2f_3)=\gamma(f_1f_2,f_3)\,\gamma(f_1,f_2)\,\omega(f_1,f_2,f_3)$ for all $f_1,f_2,f_3\in F$,
\item[(ii)] the same equation holds for all $f_1,f_2\in F\setminus\{1\}$ and $f_3\in S\setminus\{1\}$.
\end{enumerate}
Moreover, a normalized $\gamma$ satisfying (i) exists if and only if $\omega$ is a coboundary, $\omega=d\gamma'$ for some $2$-cochain $\gamma'$.
\end{theorem}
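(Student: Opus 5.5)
The plan is to measure the failure of the trivialization equation by a single $3$-cochain, observe that this defect is itself a normalized $3$-cocycle, and then use the cocycle identity to propagate triviality from generators in the last slot to arbitrary elements.

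\textbf{The defect cochain.} Write $(d\gamma)(f_1,f_2,f_3)=\gamma(f_2,f_3)\gamma(f_1,f_2f_3)/\bigl(\gamma(f_1f_2,f_3)\gamma(f_1,f_2)\bigr)$, which is the coboundary convention implicit in (i). Define
\be
D(f_1,f_2,f_3)=\frac{(d\gamma)(f_1,f_2,f_3)}{\omega(f_1,f_2,f_3)}\,.
\ee
Then (i) says exactly that $D\equiv1$ on $F^3$. Since $d\circ d=1$, $d\gamma$ satisfies the cocycle condition \eqref{eq:cocycle}. The condition \eqref{eq:cocycle} is multiplicative and $\omega$ satisfies it, so $D$ is a $3$-cocycle. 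Because $\gamma$ and $\omega$ are normalized, $D$ is normalized, i.e. $D=1$ whenever one argument is $1$. Hence (ii) upgrades to $D(f_1,f_2,s)=1$ for all $f_1,f_2\in F$ and $s\in S\cup\{1\}$. The implication (i)$\Rightarrow$(ii) is trivial.

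\textbf{Propagation to the last slot.} For (ii)$\Rightarrow$(i), apply the cocycle identity \eqref{eq:cocycle} for $D$ at the quadruple $(f_1,f_2,g,s)$ with $s\in S$:
\be
D(f_2,g,s)\,D(f_1,f_2g,s)\,D(f_1,f_2,g)=D(f_1f_2,g,s)\,D(f_1,f_2,gs)\,.
\ee
Every factor with last argument $s$ equals $1$, leaving $D(f_1,f_2,gs)=D(f_1,f_2,g)$. Since $F$ is finite, every element is a positive word $s_1\cdots s_k$ in $S$, because inverses are positive powers. Induction on $k$, starting from $D(f_1,f_2,1)=1$, then gives $D(f_1,f_2,f_3)=1$ for all $f_3$, which is (i).

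\textbf{Existence.} If a normalized $\gamma$ satisfies (i), then $\omega=d\gamma$ is a coboundary. Conversely, suppose $\omega=d\gamma'$ for a possibly non-normalized $\gamma'$. Evaluating at $(1,1,f)$ and $(f,1,1)$ and using that $\omega$ is normalized gives
\be
\gamma'(1,f)=\gamma'(1,1)\,,\qquad \gamma'(f,1)=\gamma'(1,1)\,.
\ee
Set $\gamma=\gamma'/\gamma'(1,1)$. A constant cancels in $d$, since two factors of $\gamma$ appear upstairs and two downstairs. Hence $d\gamma=\omega$ with $\gamma$ normalized.

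\textbf{Main obstacle.} The argument has no deep step; the points needing care are bookkeeping. First, the orientation of $d\gamma$ must match (i) exactly, so that $D$ really inherits the cocycle identity \eqref{eq:cocycle}. Second, one must check that normalization disposes of the cases excluded in (ii) where $f_1$, $f_2$ or $f_3$ equals $1$.
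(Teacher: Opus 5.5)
Your proposal is correct and follows essentially the same route as the paper. The defect $D=d\gamma/\omega$ (the paper's $\beta$) is shown to be a normalized $3$-cocycle, the cocycle identity at $(f_1,f_2,g,s)$ propagates triviality in the last slot along positive words in $S$, and the existence part rescales $\gamma'$ by the constant $\gamma'(1,1)$. The only cosmetic difference is that you prove the invariance $D(f_1,f_2,gs)=D(f_1,f_2,g)$ for every $g$, whereas the paper shows that the set $T$ is closed under right multiplication by generators; both amount to the same induction.
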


\begin{proof}
(i) implies (ii) trivially. For the converse define
\be
    \beta(f_1,f_2,f_3)=\frac{\gamma(f_2,f_3)\,\gamma(f_1,f_2f_3)}{\gamma(f_1f_2,f_3)\,\gamma(f_1,f_2)\,\omega(f_1,f_2,f_3)}=\frac{d\gamma(f_1,f_2,f_3)}{\omega(f_1,f_2,f_3)}\,,
\ee
so that (i) is the statement $\beta=1$. The coboundary $d\gamma$ is a $3$-cocycle since $d\circ d=1$ on the standard cochain complex of $F$, and $\omega$ is a $3$-cocycle, so $\beta$ is a $3$-cocycle and satisfies \eqref{eq:cocycle}. It is normalized: since $\gamma$ is normalized, the remaining factors of $\gamma$ cancel in pairs and
\be
    d\gamma(1,f_2,f_3)=d\gamma(f_1,1,f_3)=d\gamma(f_1,f_2,1)=1\,,
\ee
Furthermore, $\omega(1,f_2,f_3)=\omega(f_1,1,f_3)=\omega(f_1,f_2,1)=1$ since $\omega$ is normalized. Then, assumption (ii) and the normalization properties imply 
\be
\beta(f_1,f_2,s)=1 \quad\text{for all } f_1,f_2\in F\quad\text{and}\quad s\in S\cup\{1\}\,.
\ee
Next, we define
\be
    T=\{t\in F\,:\,\beta(f_1,f_2,t)=1\text{ for all }f_1,f_2\in F\}\,,
\ee
which contains $1$ and $S$. If $t\in T$ and $s\in S$, the cocycle condition \eqref{eq:cocycle} of $\beta$ at $(f_1,f_2,t,s)$,
\be
    \beta(f_2,t,s)\,\beta(f_1,f_2t,s)\,\beta(f_1,f_2,t)=\beta(f_1f_2,t,s)\,\beta(f_1,f_2,ts)\,,
\ee
has every factor equal to $1$ except $\beta(f_1,f_2,ts)$, which is therefore $1$, and thus $ts\in T$. By induction on $l$, $T$ contains every product $s_1s_2\cdots s_l$ of elements of $S$. Since $F$ is finite, the inverse of $s\in S$ is the product $s^{\text{order}(s)-1}$, so every element of $F=\langle S\rangle$ is a product of elements of $S$. Hence $T=F$, i.e. $\beta=1$, which proves (i).

The last statement is standard, and we review its proof. If (i) holds then $\omega=d\gamma$ is a coboundary. Conversely let $\omega=d\gamma'$. Evaluating $d\gamma'$ at $(1,1,f)$ and at $(f,1,1)$ gives
\be
    \frac{\gamma'(1,f)}{\gamma'(1,1)}=\omega(1,1,f)=1\,,\qquad \frac{\gamma'(1,1)}{\gamma'(f,1)}=\omega(f,1,1)=1\,,
\ee
so $\gamma'(1,f)=\gamma'(f,1)=\gamma'(1,1)$ for all $f$. The constant $2$-cochain $\kappa=\gamma'(1,1)$ has $d\kappa=1$, so $\gamma=\gamma'/\kappa$ is normalized and $d\gamma=d\gamma'/d\kappa=\omega$.
\end{proof}

\noindent \texttt{GammaTrivialization} imposes (ii) as the linear congruences \eqref{eq:additive} in the unknowns $\gamma(f_1,f_2)$, $f_1,f_2\neq1$, with the terms that have an argument $1$ set to $1$, and solves them with \texttt{SolveModZ}, which returns a normalized $\gamma$ with $d\gamma=\omega|_F$ on all of $F^3$ when $\omega|_F$ is a coboundary, and \texttt{fail} otherwise.

\begin{lemma}[$\eps$ related by linear characters are equivalent]
\label{lem:gauge}
Let $F\triangleleft H$, $\gamma$ a normalized $2$-cochain with $d\gamma=\omega|_F$, and $\chi:F\to U(1)$ a homomorphism. If $\eps$ satisfies the three relations \eqref{eq:conditions}, so does $\eps^\chi(h,f)=\eps(h,f)\,\chi({}^hf)/\chi(f)$, with the same $\gamma$, and $A(H,F,\gamma,\eps^\chi)\cong A(H,F,\gamma,\eps)$. Moreover, $\eps^\chi=\eps$ on $F\times F$, and $f\mapsto\chi({}^hf)/\chi(f)$ is a linear character of $F$.
\end{lemma}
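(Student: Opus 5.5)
The plan is to check the three relations of \eqref{eq:conditions} for $\eps^\chi$ by direct substitution. Write $\eps^\chi(h,f)=\eps(h,f)\,\lambda_h(f)$ with $\lambda_h(f)=\chi({}^hf)/\chi(f)$. First I record the two elementary facts in the last sentence of the statement, since the rest uses them.

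\emph{The last sentence.} Since ${}^h(f_1f_2)={}^hf_1\,{}^hf_2$ and $\chi$ is a homomorphism, $\lambda_h(f_1f_2)=\lambda_h(f_1)\lambda_h(f_2)$. Hence each $\lambda_h$ is a linear character of $F$; normality of $F$ guarantees $\chi({}^hf)$ is defined. For $h=f'\in F$, the conjugate ${}^{f'}f$ and $f$ differ by a commutator, and $\chi$ kills commutators. So $\lambda_{f'}=1$ and $\eps^\chi=\eps$ on $F\times F$. In particular the third relation of \eqref{eq:conditions} is preserved automatically.

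\emph{The second relation.} The extra factors on the two sides are $\lambda_h(f_1f_2)$ and $\lambda_h(f_1)\lambda_h(f_2)$. These agree by multiplicativity, and $\gamma$ is unchanged, so the relation holds for $\eps^\chi$.

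\emph{The first relation.} The left side acquires $\lambda_{h_1h_2}(f)=\chi({}^{h_1h_2}f)/\chi(f)$. The right side acquires $\lambda_{h_1}({}^{h_2}f)\,\lambda_{h_2}(f)=\dfrac{\chi({}^{h_1h_2}f)}{\chi({}^{h_2}f)}\cdot\dfrac{\chi({}^{h_2}f)}{\chi(f)}$. This telescopes to the same factor, and the $\omega(h_1,h_2|f)$ term is untouched, so the relation holds.

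\emph{The isomorphism.} I apply the criterion \eqref{eq:isomorphism} with $g=1$ and $c=\chi^{\pm1}$, choosing the sign to match the convention there. The transgression factors $\Omega_1$ are trivial by normalization of $\omega$, and $\gamma'=\gamma$. The first condition then reads $1=dc$, which holds because $\chi$ is a homomorphism. The second condition reads $\eps^\chi(h,f)/\eps(h,f)=c({}^hf)/c(f)$, which is exactly the definition of $\eps^\chi$ for $c=\chi$.

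The only step needing care is confirming that $\Omega_1\equiv1$ and that the orientation of the quotient in \eqref{eq:isomorphism} gives $c=\chi$ rather than $\chi^{-1}$. Either choice is a homomorphism, so the conclusion is unaffected.
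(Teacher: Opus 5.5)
Your proposal is correct and follows the paper's proof: the isomorphism is the case $g=1$, $c=\chi$ of the criterion \eqref{eq:isomorphism} (with $\Omega_1\equiv1$). The last two claims come from $\chi({}^{f_1}f_2)=\chi(f_2)$ and the multiplicativity of $f\mapsto{}^hf$; your telescoping check that $\eps^\chi$ satisfies all three relations of \eqref{eq:conditions} spells out a step the paper leaves implicit when it cites the criterion.
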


\begin{proof}
This is the case $g=1$ and $c=\chi$ of the isomorphism criterion \eqref{eq:isomorphism}~\cite{Davydov2009ModularIF,davydov2017lagrangian,Gai:2026hjk}. The last two statements follow from $\chi({}^{f_1}f_2)=\chi(f_2)$ and from $f\mapsto{}^hf$ being an automorphism of $F$.
\end{proof}

\begin{theorem}[$\eps$ from generators]
\label{thm:eps}
Let $F\triangleleft H\leq G$, $\gamma$ a normalized $2$-cochain with $d\gamma=\omega|_F$, and $S_H$ a generating set of $H$.
\begin{enumerate}
\item[(a)] A solution $\eps$ of \eqref{eq:conditions} has $\eps(1,f)=1$ for all $f\in F$, and for every $s\in H$ it satisfies \eqref{eq:projective}. If $k_s$ is one solution of \eqref{eq:projective}, then $\eps(s,\cdot)=k_s\chi_s$ for a homomorphism $\chi_s:F\to U(1)$.
\item[(b)] Let $\eps$ and $\eps'$ be functions on $H\times F$ that satisfy the first relation of \eqref{eq:conditions} for all $h_1\in H$, $h_2\in S_H$ and $f\in F$, with $\eps(1,\cdot)=\eps'(1,\cdot)$ and $\eps(s,\cdot)=\eps'(s,\cdot)$ for $s\in S_H$. Then $\eps=\eps'$.
\item[(c)] \texttt{epsf} returns exactly one representative of every equivalence class of the solutions $\eps$ to \eqref{eq:conditions} under Lemma~\ref{lem:gauge}.
\end{enumerate}
\end{theorem}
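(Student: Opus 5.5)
For (a), I will set $h_1=h_2=1$ in the first line of \eqref{eq:conditions}. Since $\omega$ is normalized, $\omega(1,1|f)=1$ by \eqref{eq:transgression}, so the relation reads $\eps(1,f)=\eps(1,f)^2$, which forces $\eps(1,f)=1$. The second line of \eqref{eq:conditions} at a fixed $s\in H$ is, after rearranging, exactly \eqref{eq:projective} with the stated $\beta_s$. If $k_s$ is any solution of \eqref{eq:projective}, I will consider the quotient $\chi_s=\eps(s,\cdot)/k_s$. The factors $\beta_s$ cancel in the quotient, so $\chi_s(f_1)\chi_s(f_2)=\chi_s(f_1f_2)$ and $\chi_s$ is a homomorphism $F\to U(1)$. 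This only needs $\eps$ and $k_s$ to be nowhere zero, which holds because they are $U(1)$-valued.

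For (b), I will copy the induction of Theorem~\ref{thm:gamma}. Let $T=\{h\in H:\eps(h,\cdot)=\eps'(h,\cdot)\}$, which contains $1$ and $S_H$. Take $h\in T$ and $s\in S_H$. The first relation at $(h_1,h_2)=(h,s)$ gives
\be
\eps(hs,f)=\eps(h,{}^sf)\,\eps(s,f)\,\omega(h,s|f),
\ee
and the same identity holds for $\eps'$. The right-hand sides agree, since $h\in T$ and $s\in S_H$, and ${}^sf\in F$ by normality, so $hs\in T$. Every element of a finite group is a positive word in $S_H$, using $s^{-1}=s^{\mathrm{ord}(s)-1}$, so $T=H$.

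For (c), I have to show that every equivalence class of solutions is hit, that no class is hit twice, and that every output is a genuine solution.
\begin{enumerate}
\item \emph{Every output is a solution.} The code tests each retained candidate against the first two lines of \eqref{eq:conditions} on all of $H\times H\times F$ and $H\times F\times F$. The third line holds by construction on $F\times F$, but I must check that the propagated values agree with it when $F\cap S_H\neq\emptyset$. Elements of $F$ reached by propagation carry values that are compared against the fixed ones in step 3, so any disagreement is discarded.
\item \emph{Every class is hit.} Let $\eps$ be any solution. By (a), for each $s\in S_H\setminus F$ we have $\eps(s,\cdot)=k_s\chi_s$ for some linear character $\chi_s$, so this tuple of generator values appears among the enumerated candidates. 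For $s\in F$ the values are fixed by the third line. Propagation along \eqref{eq:propagate} reaches every element of $H$, and by (b) it reproduces exactly $\eps$. Hence $\eps$ itself survives the consistency checks, and its class is represented.
\item \emph{No class is hit twice.} This is the main obstacle. The code keeps one candidate per orbit of the linear-character action in step 4, and by Lemma~\ref{lem:gauge} this action preserves solutions and algebra isomorphism classes. What remains is that two retained solutions in different orbits are not equivalent. This holds provided ``equivalence under Lemma~\ref{lem:gauge}'' means exactly membership in the same $\chi$-orbit. I will state it that way and note that the orbit relation is an equivalence relation because the action is a group action of $\widehat F$.
\end{enumerate}

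The potential subtlety is that two candidates with different generator data could produce the same $\eps$, or generator data in the same orbit could be handled inconsistently. By (b) the generator values determine $\eps$. The action of $\chi$ on an $\eps$ corresponds to the action on generator values $\eps(s,f)\mapsto\eps(s,f)\chi({}^sf)/\chi(f)$. So deduplicating at the level of full functions, or equivalently of generator tuples, is consistent, and exactly one representative per orbit is returned.
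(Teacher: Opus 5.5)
Your proposal is correct and follows the paper's proof step for step. For (a) you use the normalization of the transgression and the ratio of two solutions of \eqref{eq:projective}; for (b) the closure set $T$ under right multiplication by $S_H$; and for (c) the same three properties (no solution missed by (a) and (b), every output tested against \eqref{eq:conditions}, one representative per $\chi$-orbit because the action maps candidates to candidates and solutions to solutions). Your extra remarks, that values on $F$ reached by propagation are compared with the third relation and that the orbit relation is an equivalence relation because the linear characters of $F$ act as a group, make explicit points the paper leaves implicit.
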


\begin{proof}
(a) The first relation at $h_1=h_2=1$ gives $\eps(1,f)=\eps(1,f)^2\,\omega(1,1|f)$, and $\omega(1,1|f)=\allowbreak\omega(1,f,1)/\bigl(\omega(1,1,f)\,\omega(f,1,1)\bigr)=\allowbreak1$ by normalization, so $\eps(1,f)=1$. The second relation at $h=s$, divided by $\gamma({}^sf_1,{}^sf_2)\,\Omega_s(f_1,f_2)$, is \eqref{eq:projective}, and this holds for every $s\in H$. If $k_s$ and $\eps(s,\cdot)$ both solve \eqref{eq:projective}, their ratio $\chi_s$ satisfies $\chi_s(f_1)\chi_s(f_2)=\chi_s(f_1f_2)$, so it is a homomorphism.

(b) Let $T=\{h\in H\,:\,\eps(h,\cdot)=\eps'(h,\cdot)\}$, which contains $1$ and $S_H$. If $h\in T$ and $s\in S_H$, then for all $f\in F$
\be
    \eps(hs,f)=\eps(h,{}^sf)\,\eps(s,f)\,\omega(h,s|f)=\eps'(h,{}^sf)\,\eps'(s,f)\,\omega(h,s|f)=\eps'(hs,f)\,,
\ee
so $hs\in T$. As in the proof of Theorem~\ref{thm:gamma}, every element of the finite group $H$ is a product of elements of $S_H$, so $T=H$.

(c) We recall the steps used by \texttt{epsf} to construct $\eps$ as a solution to \eqref{eq:conditions}. A candidate is fixed by $\eps$ on $F\times F$, given by the third relation, and by $\eps(s,\cdot)=k_s\chi_s$ for $s\in S_H\setminus F$, for a choice of linear characters $\chi_s$. If \eqref{eq:projective} has no solution for some $s$, there is no $\eps$ by (a). Otherwise, \texttt{epsf} extends each candidate to $H$ by \eqref{eq:propagate}, which reaches every element of $H$ by the closure argument of (b), and discards it when an element receives two different values, which means that the candidate solution was inconsistent. Three properties follow.
\begin{enumerate}
\item[(1)] No solution is missed. By (b) a solution $\eps$ is fixed by its values on $F$ and $S_H$, which are a candidate solution by (a) and the third relation. Then, \texttt{epsf} builds $\eps$ from this candidate.
\item[(2)] No false solution is returned, since \texttt{epsf} tests the first two relations on all of $H\times H\times F$ and $H\times F\times F$, and the third holds by construction.
\item[(3)] Isomorphic solutions related by Lemma~\ref{lem:gauge} are returned once, since \texttt{epsf} keeps one candidate for each orbit of the linear characters, and this action maps candidates to candidates and, by (b), solutions to solutions.\qedhere
\end{enumerate}
\end{proof}

\noindent For Abelian $H$, \texttt{epsfA} runs the same construction with a set of independent generators of $H$, and ${}^hf=f$ makes the equivalence of Lemma~\ref{lem:gauge} trivial, so every solution is returned. Since ${}^sf=f$, the factors of $\gamma$ cancel in \eqref{eq:projective}, which becomes $\beta_s=1/\Omega_s$, and when $\omega|_H$ is identically $1$ it has the solution $k_s=1$, so that the candidates are the linear characters themselves.

\begin{lemma}[Isomorphism test]
\label{lem:isotest}
Fix $g\in G$ with $H'={}^gH$ and $F'={}^gF$. If $c_0:F\to U(1)$ solves the first line of \eqref{eq:isomorphism}, then its solutions are all and only all $c=c_0\chi$ with $\chi:F\to U(1)$ a homomorphism and \eqref{eq:isomorphism} holds for some $c$ if and only if
\be
    \frac{\eps'({}^{g}h,{}^{g}f)\,\Omega_g(h,f)}{\eps(h,f)\,\Omega_g({}^{h}f,h)}\,\frac{c_0(f)}{c_0({}^{h}f)}=\frac{\chi({}^{h}f)}{\chi(f)}
\ee
for some linear character $\chi$ of $F$ and all $h\in H$, $f\in F$.
\end{lemma}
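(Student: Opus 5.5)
The proof has two parts: first the solution set of the first line of \eqref{eq:isomorphism}, then a substitution into the second line. For the first part, note that the left-hand side of the first line of \eqref{eq:isomorphism} is a fixed function $z(f_1,f_2)$ of $(f_1,f_2)$, determined by $\gamma,\gamma',\omega$ and $g$ and independent of $c$. The equation therefore reads $dc=z$, which is linear in the multiplicative sense. If $c_0$ solves it, then for any other solution $c$ the ratio $\chi=c/c_0$ satisfies $\chi(f_1f_2)/(\chi(f_1)\chi(f_2))=1$, so $\chi$ is a homomorphism $F\to U(1)$. Conversely, for any homomorphism $\chi$ we have $d(c_0\chi)=dc_0\,d\chi=z$, so $c_0\chi$ is again a solution. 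This is the same ratio argument as in part (a) of Theorem~\ref{thm:eps}, and it proves the first claim.

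For the second claim, \eqref{eq:isomorphism} holds for some $c$ if and only if some $c$ satisfies both lines. By the first part, such a $c$ must have the form $c=c_0\chi$. Substituting into the right-hand side of the second line gives
\[
\frac{c({}^hf)}{c(f)}=\frac{c_0({}^hf)}{c_0(f)}\,\frac{\chi({}^hf)}{\chi(f)}.
\]
Dividing both sides of the second line by $c_0({}^hf)/c_0(f)$, which is a nonzero phase, yields exactly the displayed equation of the lemma. Hence a valid $c$ exists if and only if some linear character $\chi$ satisfies that equation for all $h\in H$ and $f\in F$. The step is reversible: given such a $\chi$, the function $c=c_0\chi$ satisfies both lines, the first by the first part and the second by the computation just made.

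The only point needing care is the well-definedness of ${}^hf$ and of $c_0({}^hf)$. This follows because $F\triangleleft H$, so ${}^hf\in F$. Also, the first line of \eqref{eq:isomorphism} depends only on $c$ through $dc$, so it places no constraint beyond $dc=z$. No normalization issue arises, since homomorphisms satisfy $\chi(1)=1$.

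Finally, the lemma reduces the test to a finite check because $F$ has only finitely many linear characters, namely $|F/[F,F]|$ of them. This justifies the procedure of \texttt{AlgebraClassesG}, which finds $c_0$ with \texttt{SolveModZ} and then loops over the linear characters of $F$. If no $c_0$ exists, the correctness of \texttt{SolveModZ} shown in Sec.~\ref{sec:solver} proves that the algebras are not related by this $g$.
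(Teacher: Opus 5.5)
Your proof is correct and follows the paper's argument. The ratio $c/c_0$ of two solutions of the first line of \eqref{eq:isomorphism} is a homomorphism, and conversely; substituting $c=c_0\chi$ into the second line then gives the displayed condition, which is also how the paper proves the lemma and uses it in \texttt{AlgebraClassesG}.
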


\begin{proof}
If $c$ and $c_0$ solve the first line of \eqref{eq:isomorphism}, which fixes $c(f_1f_2)/(c(f_1)\,c(f_2))$, then $\chi=c/c_0$ satisfies $\chi(f_1f_2)=\chi(f_1)\chi(f_2)$, and conversely $c_0\chi$ is a solution for every homomorphism $\chi$. Substituting $c=c_0\chi$ in the second line of \eqref{eq:isomorphism} gives the displayed condition.
\end{proof}

\noindent \texttt{AlgebraClassesG} runs over all $g\in G$ with $H'={}^gH$ and $F'={}^gF$, computes $c_0$ with \texttt{IsTrivial2coc}, whose \texttt{fail} means that no $c$ exists for this $g$, and tests the displayed condition for every linear character of $F$. By \eqref{eq:isomorphism} and Lemma~\ref{lem:isotest} this decides whether the two algebras are isomorphic.

\phantomsection
\bibliographystyle{ytphys}
\small
\bibliography{GenSym}

\end{document}